\documentclass[12pt,a4paper,oneside]{report}

\usepackage[utf8]{inputenc}
\usepackage[T1]{fontenc}
\usepackage{geometry}
\usepackage{graphicx}
\usepackage{amsmath, amssymb, amsthm}
\usepackage{hyperref}
\usepackage{natbib}
\usepackage{tikz}
\usepackage{booktabs}
\usepackage{array}
\usepackage{enumitem}
\usepackage{setspace}
\usepackage{caption}
\usepackage{subcaption}
\usepackage{float}
\usepackage{xcolor}
\usepackage{mdframed}
\usepackage{microtype}
\usepackage{fancyhdr}
\usepackage{emptypage}
\usepackage{titlesec}

\hypersetup{
  colorlinks=true,
  linkcolor=blue!60!black,
  citecolor=blue!60!black,
  urlcolor=blue!60!black,
  pdfauthor={Sergio Montenegro},
  pdftitle={Programmable Participatory Governance}
}

\titleformat{\chapter}[display]
  {\normalfont\huge\bfseries}{\chaptertitlename\ \thechapter}{20pt}{\Huge}
\titlespacing*{\chapter}{0pt}{40pt}{20pt}

\fancypagestyle{plain}{%
  \fancyhf{}%
  \fancyfoot[C]{\small\thepage}%
}

\newtheorem{definition}{Definition}[chapter]
\newtheorem{proposition}{Proposition}[chapter]
\newtheorem{theorem}{Theorem}[chapter]

\title{%
  \textbf{Programmable Participatory Governance:}\\[4pt]
  A Formal, Scalable, and Empirically-Grounded Framework\\
  for Democratic Systems
}

\author{%
  Sergio Montenegro\\[4pt]
  \small Independent Researcher\\
  \small\href{https://orcid.org/0009-0001-1869-9580}{ORCID: 0009-0001-1869-9580}
}

\date{April 2026}

\begin{document}

% ── Title Page ───────────────────────────────────────────────────────────────
\begin{titlepage}
\centering
\vspace*{2cm}
{\LARGE\bfseries Programmable Participatory Governance}\\[1em]
{\large\itshape A Formal Framework for Transparent, Accountable, and Citizen-Responsive\\
Democratic Systems: From Deliberative Theory to Decentralised Architecture}\\[3cm]
\rule{0.7\textwidth}{0.4pt}\\[0.5em]
{\normalsize
\begin{tabular}{c}
\textbf{Sergio Montenegro} \\
\small Independent Researcher \\
\small\href{https://orcid.org/0009-0001-1869-9580}{\texttt{orcid.org/0009-0001-1869-9580}}
\end{tabular}
}\\[0.5em]
\rule{0.7\textwidth}{0.4pt}\\[2.5cm]
% {\normalsize\textit{Submitted in partial fulfilment of the requirements for\\
% a research thesis in governance, digital democracy, and distributed systems}}\\[0.6em]
{\normalsize April 2026}\\[3cm]
\vfill
\begin{footnotesize}
This work is released under the Creative Commons Attribution 4.0 International Licence (CC BY 4.0).\\[0.3em]
Source code, simulation data, and all supplementary materials:\\[0.2em]
\href{https://github.com/seht/participatory-governance}{github.com/seht/participatory-governance}
\end{footnotesize}
\end{titlepage}

% ── Declaration ───────────────────────────────────────────────────────────────
\chapter*{Declaration}
\addcontentsline{toc}{chapter}{Declaration}
\thispagestyle{plain}

I declare that this thesis is my own original work. Where I have drawn on the ideas, data, or words of others, I have acknowledged this through citation. This work has not been submitted for any other degree or qualification.

\vspace{2.5cm}
\noindent\textbf{Sergio Montenegro} \hfill April 2026
\clearpage

% ── Acknowledgements ──────────────────────────────────────────────────────────
\chapter*{Acknowledgements}
\addcontentsline{toc}{chapter}{Acknowledgements}
\thispagestyle{plain}

This research is motivated by the practical challenges facing contemporary democratic institutions, and by the existence of established bodies of democratic theory, institutional design, and cryptographic engineering that---while rarely combined---offer complementary tools for addressing them. The intellectual debts run wide: from Elinor Ostrom's patient institutional empiricism and James Fishkin's deliberative polling innovation, to the empirical study of Swiss direct democracy and the architectural work emerging from decentralised computing research.

I am grateful to the open-source research community whose tools make transparent, reproducible scholarship possible.

\clearpage

% ── Abstract ──────────────────────────────────────────────────────────────────
\chapter*{Abstract}
\addcontentsline{toc}{chapter}{Abstract}
\thispagestyle{plain}

\noindent\textbf{Background and motivation.}
Public confidence in democratic institutions has declined measurably across OECD countries over the past three decades \citep{norris2011democratic, dalton2004democratic}. The International IDEA Global State of Democracy report \citep{idea2021global} recorded that by 2021 more countries were experiencing democratic backsliding than democratic progress. Voter turnout in national elections has fallen in most established democracies; non-electoral participation is systematically skewed by income, education, and age \citep{schlozman2012unheavenly}. Decision processes remain insufficiently transparent: a comprehensive study of US policy outcomes by \citet{gilens2014testing} found that the preferences of median-income citizens had near-zero independent influence on policy once the preferences of economic elites and organised interest groups were controlled for. These empirical findings pose a concrete design problem: can governance institutions be restructured to broaden participation, increase transparency, and improve accountability without sacrificing stability or decision quality?

\vspace{0.7em}
\noindent\textbf{Scope and objectives.}
This thesis proposes \emph{Programmable Participatory Governance} (PPG), a formal framework designed to address the empirically documented deficits above. PPG synthesises five research traditions: (1)~deliberative democratic theory, drawing on the work of Habermas, Fishkin, and Gutmann and Thompson; (2)~participatory and direct democracy, drawing on Pateman, Barber, and the empirical record of Swiss direct democracy \citep{linder2010swiss, frey1994direct}; (3)~institutional economics and collective action theory, principally Ostrom's design principles for self-governing institutions \citep{ostrom1990governing}; (4)~cryptographic and distributed systems research enabling trustless, verifiable computation \citep{buterin2014ethereum, boneh2020graduate}; and (5)~political economy and institutional critique, drawing on the convergent analyses of Hayek, Sowell, Herman and Chomsky, Hirschman, and Scott on dispersed knowledge, structural media distortion, voice mechanisms, and administrative legibility failures. The objective is a framework that is simultaneously grounded in democratic theory, formally specified, empirically validated through simulation, and architecturally specified for practical deployment.

\vspace{0.7em}
\noindent\textbf{Framework.}
PPG structures governance as a sequential six-stage pipeline: proposal submission, structured deliberation, constitutional validation, verifiable voting, execution, and citizen-triggered veto. The pipeline is formalised as a state machine $\mathcal{G} = (S, s_0, \Sigma, \delta, F)$, with all transitions being deterministic and auditable. A key design contribution is the dynamic quorum mechanism: rather than fixing a single participation threshold, the required quorum scales with the estimated impact of each proposal ($Q(i) = Q_{\text{base}} + \alpha\cdot\sigma(i)$), calibrating the legitimacy requirement to the decision's stakes. This adapts the practice observed in Swiss constitutional referendums---where double-majority requirements apply only to constitutional changes---to a general, continuously parameterisable setting.

A second contribution is a composite legitimacy score $L(d)$ weighting participation rate, approval margin, and deliberation quality, enabling cross-decision comparison and longitudinal tracking of institutional health. A third contribution is the \emph{Transparency Default}: drawing on open government research \citep{oecd2020open} and Arendt's analysis of the public realm \citep{arendt1958human}, the framework specifies that all governance actions are public records by default and that no citizen surveillance infrastructure is architecturally possible at the protocol level. This is operationalised in the system architecture through strict data minimisation and zero-knowledge cryptographic proofs of eligibility.

\vspace{0.7em}
\noindent\textbf{Identity model.}
Digital governance is uniquely vulnerable to Sybil attacks---the creation of multiple false identities to inflate voting weight \citep{douceur2002sybil}. PPG proposes a hybrid identity model combining: social-graph verification \citep{pentland2014social}, which uses the structure of mutual-attestation networks to identify likely fake accounts; zero-knowledge proofs of eligibility \citep{boneh2020graduate}, which allow citizens to prove residence and age without revealing personal data; and non-transferable Soulbound Token credentials \citep{buterin2022desoc}, which prevent the buying and selling of governance standing. No existing solution fully satisfies all four desired properties (uniqueness, privacy, scalability, Sybil resistance) simultaneously; the proposed hybrid is a current best-practice synthesis that the thesis explicitly treats as subject to revision.

\vspace{0.7em}
\noindent\textbf{Simulation evidence.}
Stochastic simulations across three behavioural archetypes (passive, active, and strategic agents) over 100 decision rounds demonstrate the existence of parameter regimes in which several framework properties hold: the dynamic quorum mechanism maintains system throughput even as legitimacy requirements increase; and the veto mechanism operates as a stability-preserving safety valve rather than a destabilising reversal mechanism. Game-theoretic analysis (Chapter~\ref{sec:gametheory}) further demonstrates that the cost of mobilising a sufficiently large faction to manipulate outcomes exceeds the expected gain under a broad range of quorum levels.

\vspace{0.7em}
\noindent\textbf{Decentralised architecture.}
The PPG framework is designed to be deployable on centralised, federated, or decentralised infrastructure. A five-layer decentralised reference architecture is specified---settlement layer, identity layer, execution layer, coordination layer, and interface layer---drawing on Ethereum-based smart contract technology, Layer 2 scaling networks, content-addressed storage, and progressive web application design. This architecture addresses the trust problem that has undermined previous digital democracy platforms: by encoding governance rules in publicly auditable, self-enforcing smart contracts, the design eliminates reliance on any single operator. Lessons from existing DAO governance models identify the key structural failure of current on-chain governance (plutocratic one-token-one-vote) and show how PPG's identity-grounded design corrects it.

\vspace{0.7em}
\noindent\textbf{Iterative design, limitations, and implementation pathway.}
The framework is explicitly iterative: its own parameters are governed through the same participatory pipeline it describes, following Ostrom's principle that effective institutions evolve through practice \citep{ostrom1990governing}. The thesis identifies eight significant limitations: the identity problem is partially rather than fully solved; the simulations are stylised stochastic models (not full agent-based models) and are simplifications of real population heterogeneity; game-theoretic model assumptions introduce conservative biases that require empirical correction; deliberation quality is difficult to guarantee at scale; constitutional and legal integration is complex; digital access inequality remains a structural barrier; deliberation is subject to capture by well-resourced actors; and decentralised architecture components are at varying stages of maturity. These limitations are not minimised but addressed through a three-tier implementation pathway (local, regional, national) that maps each limitation to a concrete resolution strategy calibrated to the stakes and institutional context of each deployment scale.

\vspace{0.7em}
\noindent\textbf{Keywords:} participatory governance, deliberative democracy, direct democracy, Swiss direct democracy, formal models, state machine, dynamic quorum, stochastic simulation, identity, Sybil resistance, zero-knowledge proofs, Soulbound Tokens, game theory, decentralised autonomous organisations, smart contracts, open government, transparency, financial transparency ledger, civic participation mandate, constrained administrative authority, merit-based accountability, implementation pathway, institutional design.

\clearpage

% ── Table of Contents, Figures, Tables ────────────────────────────────────────
\tableofcontents
\clearpage
\listoffigures
\clearpage
\listoftables
\clearpage

% ══════════════════════════════════════════════════════════════════
\chapter{Introduction}
\label{sec:intro}

Governance institutions are the mechanisms through which collective decisions are made, enforced, and revised. Their performance can be assessed along several empirically measurable dimensions: participation breadth, decision transparency, accountability responsiveness, and resistance to capture by organised minorities. On all four dimensions, the available evidence suggests that contemporary democratic institutions are under significant strain.

\section{Empirical Context}

Public confidence in national governments fell in the majority of OECD countries between 1995 and 2020 \citep{norris2011democratic}. The International IDEA Global State of Democracy report for 2021 recorded that the number of countries experiencing democratic backsliding exceeded those showing democratic progress for the fifth consecutive year \citep{idea2021global}. Voter turnout in national elections has declined in most advanced democracies over a 30-year period \citep{dalton2004democratic}. Non-electoral participation---engagement with public consultations, petitions, local government---is measurably unequal: \citet{schlozman2012unheavenly} document systematic income-, education-, and age-based stratification in participatory voice across the United States. \citet{gilens2014testing}, analysing 1{,}779 US policy outcomes, find that the preferences of median-income citizens have near-zero independent predictive power on policy once elite and interest-group preferences are controlled for---a finding consistent with Dahl's warning about the structural conditions under which polyarchies degenerate into oligarchies \citep{dahl1989democracy}.

Decision-making opacity compounds these participation deficits. The OECD's open government surveys consistently find that citizens report insufficient access to government information and insufficient opportunities to influence policy before it is made \citep{oecd2020open}. Accountability operates primarily through electoral cycles that, in most systems, span four or five years---a temporal mismatch that allows consequential decisions to take effect with limited real-time citizen response.

These observations do not constitute an argument against representative democracy or for any particular reform ideology. They are measurable performance gaps that motivate an engineering question: \emph{what institutional design changes could improve performance on these dimensions, and at what costs?}

\section{Research Question and Approach}

This thesis addresses the following research question:

\begin{mdframed}
\textit{Can a formal governance framework---integrating structured deliberation, dynamic participation thresholds, verifiable voting, and citizen-triggered veto mechanisms---improve participation breadth, decision transparency, and accountability responsiveness in democratic systems, while maintaining stability and resistance to manipulation by coordinated minorities?}
\end{mdframed}

We propose \textbf{Programmable Participatory Governance} (PPG), a framework that operationalises several well-established theoretical contributions in institutional design---Ostrom's commons governance principles \citep{ostrom1990governing}, Fishkin's deliberative polling model \citep{fishkin1991democracy}, the Swiss direct democracy tradition \citep{linder2010swiss}---and extends them with a formal computational specification suitable for digital implementation.

The approach is comparative and incremental rather than revolutionary: PPG is calibrated against empirical precedents (particularly Switzerland), formally specified to enable verification, and validated through stochastic simulation. It is not presented as a replacement for existing institutions but as a design framework from which components can be selectively adopted.

\section{Contributions}

The specific contributions of this thesis are:
\begin{enumerate}
  \item A formal state-machine model of the governance pipeline, enabling deterministic auditing of all state transitions.
  \item A dynamic quorum mechanism calibrating participation requirements to proposal impact, extending the Swiss double-majority precedent to a continuously parameterisable model.
  \item A composite legitimacy score for quantitative, cross-decision governance evaluation.
  \item A hybrid identity model balancing Sybil resistance, privacy, and scalability, grounded in current cryptographic best practice.
  \item Stochastic simulation evidence of system resilience across behavioural archetypes, including adversarial agents.
  \item Game-theoretic analysis of manipulation resistance: a single-round deterrence proof (Theorem~\ref{thm:manipulation}) and a repeated-game collusion containment result (Proposition~\ref{prop:collusion}) establishing the conditions under which dynamic quorum parameters deter both single-round manipulation and sustained long-run capture.
  \item A Transparency Default specification---all governance actions public by default; citizen data private by default---grounded in open government research \citep{oecd2020open} and operationalised architecturally.
  \item A Financial Transparency Ledger specification requiring real-time, transaction-level, machine-readable public disclosure of all public financial flows.
  \item A Civic Participation Mandate module treating governance engagement as a civic obligation analogous to tax filing, with blank-vote provisions ensuring no political expression is compelled.
  \item A Constrained Administrative Authority specification binding public bodies to explicitly mandated scope and requiring public deliberation records.
  \item A Merit-and-Mandate public role accountability model applying dual accountability---professional competence and citizen mandate---to all public positions.
  \item A five-layer decentralised reference architecture enabling trustless, operator-independent deployment.
  \item An iterative governance framework applying Ostrom's adaptive institution principles to protocol evolution.
  \item A three-tier implementation pathway (local, regional, national) with concrete strategies for resolving each identified limitation at each scale.
\end{enumerate}

\section{Structure}

Chapter~\ref{sec:related} reviews five primary research traditions---deliberative democratic theory, participatory and direct democracy, collective action theory, cryptographic and distributed systems research, and political economy and institutional critique---alongside the game-theoretic and mechanism design foundations used in Chapters~\ref{sec:gametheory} and~\ref{sec:properties}. Chapter~\ref{sec:problem} analyses the documented problem space with empirical evidence and identifies the structural logic common to the five diagnosed deficits (Section~\ref{sec:conversion}). Chapter~\ref{sec:framework} presents the governance framework, including the Transparency Default, Financial Transparency Ledger, Civic Participation Mandate, Constrained Administrative Authority, and Merit-and-Mandate model. Chapter~\ref{sec:formal} develops the formal model. Chapter~\ref{sec:identity} addresses the identity problem. Chapter~\ref{sec:architecture} describes the system architecture. Chapter~\ref{sec:simulation} presents simulation methodology and results. Chapter~\ref{sec:gametheory} provides game-theoretic analysis, including a single-round manipulation deterrence proof and a repeated-game collusion containment result. Chapter~\ref{sec:properties} analyses system properties. Chapter~\ref{sec:decentralised} specifies the decentralised architecture. Chapter~\ref{sec:limitations} identifies limitations. Chapter~\ref{sec:pathway} develops a three-tier implementation pathway from local to national deployment. Chapter~\ref{sec:conclusion} concludes.

% ══════════════════════════════════════════════════════════════════
\chapter{Related Work}
\label{sec:related}

PPG synthesises contributions from five primary research traditions, reviewed across eight thematic sections; the section on game theory and social choice additionally provides the formal foundations for Chapters~\ref{sec:gametheory} and~\ref{sec:properties}. Each section identifies the empirical findings and theoretical tools that inform specific design choices.

\section{Democratic Theory and Legitimacy}

\citet{dahl1989democracy} provides the foundational criteria for democratic legitimacy used throughout this thesis: inclusion of all affected parties, equal consideration of interests, enlightened understanding (access to relevant information), effective participation, and agenda control. His polyarchy framework is explicitly empirical---it describes institutional approximations of democratic ideals and acknowledges the structural tensions inherent in large-scale democracies. \citet{lupia2006dahl} extends Dahl's framework to modern information environments, noting that democratic legitimacy in the twenty-first century depends critically on whether citizens have access to the information needed to make competent decisions.

\citet{habermas1996between} provides a procedural theory of democratic legitimacy grounded in communicative rationality: decisions are legitimate insofar as they emerge from rational discourse free of strategic distortion. This is an idealist standard that real deliberation never fully meets, but it provides the theoretical basis for evaluating deliberation quality---the design criterion operationalised in PPG's legitimacy score $\Delta_{\text{delib}}$.

\citet{rawls1971theory} contributes the principle of fair procedure as a foundation for just institutions, independent of the substantive outcomes produced. The veil-of-ignorance device motivates PPG's structural equality constraints: governance rules are evaluated as if participants do not know which position they will occupy within the system.

\section{Deliberative Democracy: Empirical Evidence}

The deliberative democracy tradition moves from normative theory to the empirical question of whether structured deliberation actually changes preferences and improves decision quality. \citet{fishkin1991democracy} introduces deliberative polling as a method for measuring considered public opinion: random samples of citizens are given balanced briefing materials, engage in moderated small-group discussion with expert panels, and are surveyed before and after. The empirical findings are consistent across more than 100 deliberative polls conducted in 30 countries \citep{fishkin2018stanford}: deliberation systematically increases knowledge, moderates extreme positions, and shifts preferences toward more other-regarding stances. Importantly, the changes are durable rather than ephemeral. \citet{dutwin2003character} provides complementary experimental evidence that structured group discussion increases argument quality and factual accuracy.

\citet{gutmann2004deliberating} provide a political philosophy grounding for deliberative institutions, arguing that the moral case for deliberative democracy rests on three grounds: it tends to produce epistemically better decisions (because more information is considered), it promotes mutual respect across disagreement, and it generates decisions with higher legitimacy because participants have been heard. \citet{gastil2010promise} reviews the empirical literature on deliberative practices in real policy contexts, finding consistent positive effects on participant knowledge, civic engagement, and willingness to accept decisions they did not personally favour---a finding directly relevant to PPG's veto threshold design.

\section{Participatory and Direct Democracy}

\citet{pateman1970participation} provides the developmental argument for participation: the act of participating in governance cultivates the civic competencies and dispositions needed to sustain self-governance over time. This is an empirical claim, not merely a normative one, and it is supported by longitudinal civic engagement research showing that early participatory experiences predict higher long-term political engagement \citep{held2006models}.

\citet{barber1984strong} distinguishes thin democracy---periodic voting as the primary participatory act---from strong democracy, in which citizens engage continuously in the negotiation of shared public life. This distinction is empirically grounded in the observation that electoral participation alone produces systematically less responsive policy than systems with between-election participatory mechanisms \citep{fung2006varieties}. \citet{fung2006varieties} provides an analytical taxonomy of participatory governance institutions across three dimensions---who participates, how they communicate and decide, and how deliberation connects to power---that PPG uses to map its design choices.

\section{The Swiss Model of Direct Democracy}

Switzerland provides the most extensively studied national system of direct democracy, with continuous institutional evolution since the federal constitution of 1848. The three main mechanisms---mandatory referendums on constitutional amendments, optional referendums on legislation (50{,}000 signatures within 100 days), and popular initiatives for constitutional proposals (100{,}000 signatures within 18 months)---create ongoing, between-election channels for citizen agenda-setting and legislative veto.

The empirical record supports several favourable effects. \citet{frey1994direct} finds measurably higher reported life satisfaction in Swiss cantons with greater direct democratic instruments, even after controlling for income and other confounders, interpreted as reflecting the value of procedural self-determination beyond policy outcomes. \citet{stutzer2000happiness} extends this finding to happiness measures across Swiss municipalities. \citet{linder2010swiss} provides a comprehensive assessment, documenting higher voter knowledge about policy details, greater policy responsiveness to median voter preferences, and lower political alienation compared to purely representative systems.

Critical assessments are equally important. Deliberation before referendums is largely informal and subject to unequal campaign resource effects. Linguistic minorities have been outvoted on cultural issues. Turnout in referendums is typically 40--50\%---high by international comparison but still leaving a majority non-participating \citep{linder2010swiss}. The fixed-rules structure provides no adaptive calibration mechanism.

PPG takes the Swiss model as its primary empirical reference point: it preserves the core mechanisms (between-election voting, citizen-initiated proposals, minority veto) while addressing the identified weaknesses through formal deliberation architecture, dynamic quorum calibration, and digital infrastructure that removes signature-collection barriers.

\section{Collective Action and Institutional Design}

\citet{ostrom1990governing} establishes through comparative case analysis that communities can effectively self-govern shared resources without either state control or privatisation, provided institutional design satisfies a set of empirically derived principles: clearly defined boundaries, rules proportional to local conditions, collective choice arrangements, monitoring, graduated sanctions, conflict resolution, and recognised rights to organise. Her analysis of why some commons institutions succeed and others fail is directly applicable to governance systems: the failure modes---capture by subgroups, inadequate monitoring, inability to adapt rules---are precisely the failure modes PPG is designed to resist.

\section{Game Theory and Social Choice}

\citet{arrow1951social} proves that no social welfare function satisfying four minimal conditions simultaneously (unrestricted domain, Pareto efficiency, independence of irrelevant alternatives, non-dictatorship) exists. Arrow's theorem is foundational for PPG's design: it establishes that any governance mechanism involves trade-offs, ruling out the possibility of a universally optimal procedure. PPG's response is to make the trade-offs explicit and configurable rather than hidden.

\citet{nash1950equilibrium} provides the equilibrium concept used in Chapter~\ref{sec:gametheory} to characterise strategic behaviour within the governance system. The repeated-game extension developed in Chapter~\ref{sec:gametheory} draws on the folk theorem for infinitely repeated games \citep{fudenberg1986folk}, which characterises the conditions under which long-run collusion can be sustained as a subgame-perfect equilibrium. The mechanism design results of \citet{hurwicz1973design} and \citet{maskin2008mechanism} provide the formal basis for the epistemic argument formalised in Chapter~\ref{sec:properties}.

\section{Digital Governance and Decentralised Systems}

The emergence of programmable blockchains \citep{buterin2014ethereum} created the first infrastructure on which governance rules could be expressed as self-enforcing code: proposals, votes, and executions are recorded immutably, quorum conditions are enforced by contract, and no operator can alter recorded outcomes. \citet{defilippi2018blockchain} analyse the jurisprudential implications of this shift.

Decentralised autonomous organisations (DAOs) are the first large-scale experiments with on-chain governance \citep{hassan2021decentralized}. The empirical record is instructive in both directions: DAOs demonstrate technical feasibility but consistently exhibit plutocratic dynamics, low participation rates (typically below 10\% of token holders), and governance capture \citep{chohan2017dao, siri2022daos}. \citet{reijers2021onchain} analyse the structural tension between on-chain verifiability and off-chain deliberation quality. These failure modes, and the design corrections PPG proposes, are analysed in detail in Chapter~\ref{sec:decentralised}.

\citet{buterin2022desoc} propose Soulbound Tokens as non-transferable identity credentials for Web3, directly relevant to PPG's identity model (Chapter~\ref{sec:identity}). \citet{ford2020identity} surveys the proof-of-personhood problem, establishing the technical requirements that PPG's hybrid identity model is designed to satisfy. \citet{mueller2019digital} provides a sceptical political science assessment of digital democracy platforms, cataloguing the failure modes of direct e-democracy implementations, which PPG's deliberation layer is designed to address.

Beyond DAO governance, a parallel tradition of civic technology platforms has developed for non-financial participatory decision-making at municipal and national scale. These are the closest direct precursors to PPG's governance pipeline, and their failure modes are directly instructive. Taiwan's vTaiwan process, initiated in 2014, combines the Pol.is deliberation platform---which uses machine learning to identify points of cross-partisan consensus from citizen statements---with structured government response commitments; over 26 national policy items have been adopted through the process \citep{tang2019digital}. Iceland's Better Reykjavik platform, deployed since 2010, enables citizens to submit and prioritise policy ideas, with over 2{,}500 ideas implemented at municipal level \citep{gilman2016better}. The Decidim platform, deployed in over 400 municipalities globally including Barcelona and Helsinki, provides open-source tools for participatory budgeting, policy proposals, and citizen assemblies, and represents the most institutionally developed civic technology commons currently in operation \citep{barandiaran2017decidim}.

None of these platforms integrates formal state machine governance, cryptographically anonymous voting, a verifiable financial ledger, or a dynamic quorum mechanism. Table~\ref{tab:civictech} places PPG against these platforms and the leading DAO governance systems on four properties that the analysis of failure modes identifies as necessary for trustworthy large-scale participatory governance.

\begin{table}[htbp]
\centering
\caption{Comparison of PPG with existing civic technology and DAO governance platforms.}
\label{tab:civictech}
\begin{tabular}{lcccc}
\toprule
\textbf{Platform} & \textbf{Formal} & \textbf{Anonymous} & \textbf{Financial} & \textbf{Dynamic} \\
& \textbf{state machine} & \textbf{voting} & \textbf{ledger} & \textbf{quorum} \\
\midrule
vTaiwan / Pol.is & No & No & No & No \\
Better Reykjavik & No & No & No & No \\
Decidim & Partial & No & No & No \\
Compound (DAO) & Partial & No & No & No \\
\textbf{PPG (this work)} & \textbf{Yes} & \textbf{Yes} & \textbf{Yes} & \textbf{Yes} \\
\bottomrule
\end{tabular}
\end{table}

The systematic gap across all existing platforms is the simultaneous combination of formal verifiability, vote anonymity, and financial transparency. Individual features exist in isolation; PPG integrates them as architectural requirements rather than optional features.

\section{Political Economy and Institutional Critique}

PPG draws on a set of contributions from political economy and institutional critique that span conventional ideological divisions, reflecting the thesis's concern with structural analysis rather than partisan prescription. These thinkers arrive at convergent conclusions about the epistemic and accountability failures of governance from independent analytical traditions---a convergence that strengthens rather than coincidentally supports the framework's design commitments.

\citet{hayek1945use}, in one of the most widely cited articles in twentieth-century economics, argues that the central problem of social organisation is not the allocation of known resources by a planner but the utilisation of knowledge that is inherently dispersed, local, and tacit---knowledge that no central authority can aggregate. This epistemological argument applies directly to governance: the participatory mechanisms in PPG rest not only on normative grounds of democratic legitimacy but on the epistemic claim that dispersed citizen knowledge about local conditions, preferences, and consequences cannot be adequately substituted by centralised administrative discretion. \citet{sowell1980knowledge} extends this argument to a systematic study of institutional design, demonstrating how the structure of incentives and information flows determines whether dispersed knowledge is used or suppressed by organisational arrangements.

\citet{herman1988manufacturing} provide a structural analysis of how mass media filter the public agenda through five mechanisms: ownership concentration, advertising dependence, official sourcing, flak, and ideological assumptions. Their analysis is directly relevant to PPG's deliberation layer: the same structural distortions that shape media discourse---disproportionate amplification of well-resourced voices, marginalisation of counter-narratives, agenda-setting by organised interests---operate in any public deliberation process without architectural correctives. The argument mapping, balanced briefing, and public archive specifications in PPG's deliberation layer are designed, in part, to resist precisely these dynamics.

\citet{hirschman1970exit} introduces the conceptual pair of \emph{exit} and \emph{voice} as the two fundamental mechanisms through which members of organisations signal dissatisfaction: they can leave (exit) or they can speak (voice). His analysis demonstrates that where exit is cheap, voice atrophies---members who can disengage have reduced incentive to organise for reform. Applied to democratic governance: citizens who disengage (abstention as exit) degrade the quality of the collective participatory signal. PPG's Civic Participation Mandate (Section~\ref{sec:mandate}) is grounded in Hirschman's logic: by raising the cost of disengagement and providing structured voice mechanisms, the system maintains the integrity of participatory input.

\citet{scott1998seeing} analyses the epistemological failures of twentieth-century state-led planning, showing how administrative systems systematically simplify and codify the social world to make it ``legible'' for management---and how this legibility project destroys the local, practical knowledge (\textit{m\^etis}) on which functioning social systems depend. Scott's critique of unconstrained administrative discretion provides independent theoretical grounding for two of PPG's core design specifications: Constrained Administrative Authority (Section~\ref{sec:constrained}), which binds public bodies to explicitly mandated scopes; and the transparent deliberation record, which makes administrative reasoning visible and therefore contestable.

These contributions do not share a political tradition---Hayek and Sowell write from within classical liberalism and conservative political economy; Herman and Chomsky from structural media critique; Hirschman from institutionalist political economy; Scott from anthropology informed by libertarian socialism. Their convergence on the structural importance of dispersed knowledge, transparent deliberation, and constrained administrative power reflects not ideological agreement but independent analytical conclusions about how governance institutions characteristically fail and what design responses are indicated.

% ══════════════════════════════════════════════════════════════════
\chapter{Problem Analysis}
\label{sec:problem}

This chapter documents the specific, measurable deficits that PPG is designed to address. Each deficit is grounded in empirical evidence, and each motivates specific design choices in the framework.

\section{Participation Gap}

Citizen participation in governance is episodic and unequal. In OECD countries, voting in national elections is the primary participatory act for the majority of citizens; participation in between-election governance processes---public consultations, petitions, local government meetings---involves a much smaller and systematically unrepresentative subset of the population \citep{schlozman2012unheavenly}. The socioeconomic stratification of participation is well-documented: higher income, higher education, and older age are the strongest predictors of participatory engagement across virtually all established democracies \citep{held2006models, dalton2004democratic}.

This participation inequality has policy consequences. \citet{gilens2014testing} find that US federal policy outcomes correlate strongly with the preferences of the top income quintile and with organised interest group positions, but show near-zero correlation with median-income preferences. While this specific finding applies to the United States, comparable patterns of interest-group influence over policy have been documented in European contexts \citep{norris2011democratic}.

The participation problem is not simply one of low engagement: it is one of \emph{structured inequality} in who participates and whose voice is weighted. Any governance reform that broadens participation without addressing its distributional structure risks amplifying existing inequalities.

\begin{figure}[htbp]
\centering
\includegraphics[width=0.85\textwidth]{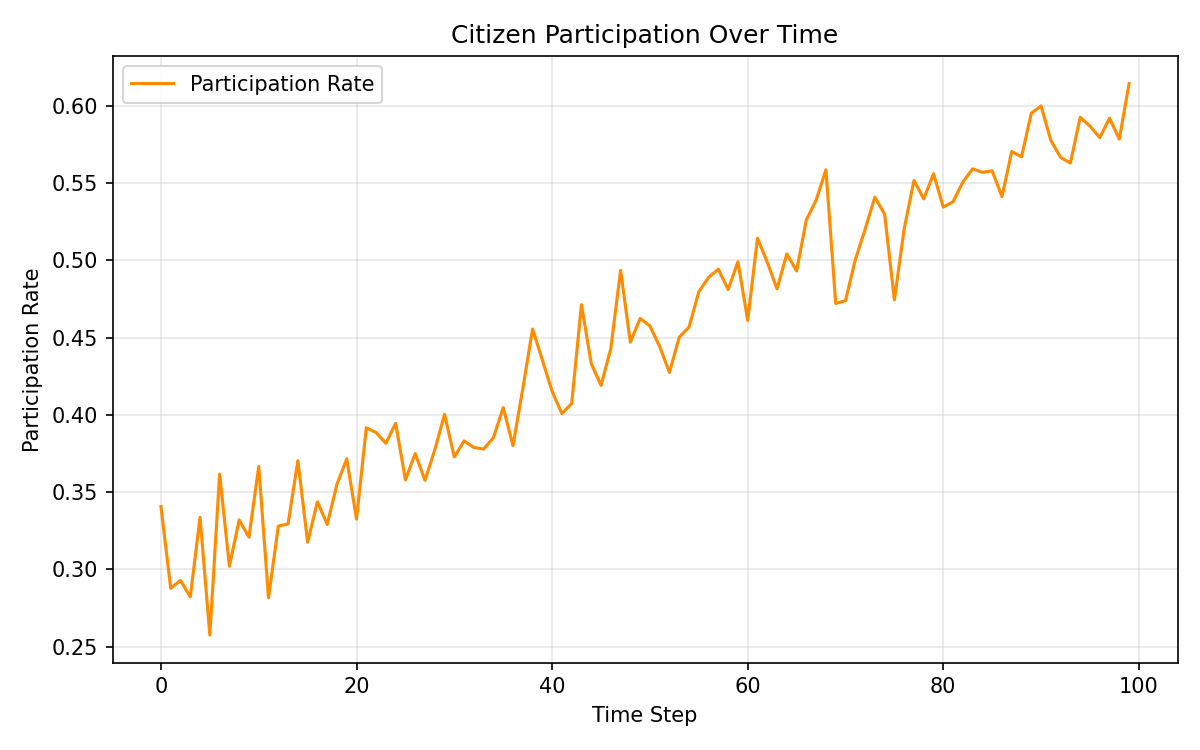}
\caption{Simulated citizen participation rates over 100 decision rounds under the PPG adaptive quorum mechanism. Participation gradually converges toward equilibrium as agents learn participation norms. The stochastic simulation incorporates three participation archetypes reflecting observed heterogeneity in civic engagement (see Chapter~\ref{sec:simulation}).}
\label{fig:participation}
\end{figure}

\section{Transparency Deficit}

Transparency in governance refers to the availability of relevant information---about deliberations, decisions, rationale, and implementation---to affected citizens. Transparency is both intrinsically valued (as a component of democratic legitimacy) and instrumentally important (as a precondition for accountability).

The OECD Open Government reviews consistently find that citizens across member states report insufficient access to information about how decisions are made \citep{oecd2020open}. Freedom of Information legislation, while expanding in scope, places the burden of information access on the requesting citizen rather than on the producing institution: documents are private unless released, rather than public unless restricted. In practice this creates substantial barriers: requests are delayed, partially redacted, or denied on broad exemption grounds. The gap between legal rights to information and practical access is large in most jurisdictions.

Transparency also affects deliberation quality. Where decision rationale is not documented, accountability is retrospective and impressionistic. Where legislative deliberations are not recorded accessibly, citizens cannot evaluate whether their interests were considered. PPG addresses this structurally: deliberation records, vote tallies, and decision rationale are archived as public-by-default records, reversing the access burden.

\section{Accountability Delay}

Representative democracy's primary accountability mechanism is electoral: citizens evaluate incumbent performance and vote accordingly at regular intervals. This creates a structural temporal mismatch. In most systems, election cycles span four to five years. Consequential decisions---fiscal commitments, infrastructure projects, regulatory changes---take effect immediately and may be difficult or impossible to reverse by the time the next election arrives.

The academic literature on democratic accountability consistently identifies this lag as a structural weakness rather than an incidental feature \citep{norris2011democratic}. Real-time accountability mechanisms---legislative oversight, judicial review, independent audit bodies---partially compensate but operate with limited citizen involvement and on slow institutional timescales. PPG's veto mechanism is designed as a between-election accountability instrument: citizens can trigger a review of a specific decision within a defined window, without requiring a general election.

\section{Influence Concentration}

Formal democratic equality (one person, one vote) coexists with substantial inequality in effective political influence. Well-resourced actors---business associations, professional lobbies, large individual donors---have structural advantages in agenda-setting, policy framing, and access to decision-makers that are not available to unorganised citizens \citep{schlozman2012unheavenly}. This is not a unique pathology of any one system: cross-national research finds that organised interests are disproportionately represented in policy processes relative to diffuse citizen interests in virtually all liberal democracies \citep{norris2011democratic}. \citet{herman1988manufacturing} provide a structural account of the mechanisms through which agenda-setting power is concentrated: ownership patterns, sourcing conventions, and resource asymmetries systematically filter whose concerns enter deliberative space, regardless of formal equal access provisions.

PPG does not eliminate organised interest influence, which would be both impractical and potentially counter-productive (organised expertise can be valuable in deliberation). It addresses influence concentration through two mechanisms: public deliberation records, which make lobbying activity and its effects visible; and mandatory deliberation phases with structured argument review, which require proposals to withstand scrutiny from a diverse range of participants rather than passing through informal channels.

\section{Transparency Asymmetry Between Institutions and Citizens}

A fifth structural feature, observable across governance systems, is an asymmetry in the direction of information flows: states collect substantial information about citizens (tax records, property registrations, identity documents, and, increasingly, digital behavioural data through surveillance and platform partnerships), while citizen access to information about state deliberations and operations is comparatively limited \citep{zuboff2019surveillance}.

This is an empirically observable feature of contemporary governance architecture, documented extensively in the open government literature \citep{oecd2020open} and in surveillance studies \citep{zuboff2019surveillance}. \citet{rosanvallon2008counter} identifies counter-democratic monitoring---organised citizen surveillance of the state---as the mechanism by which civil society partially compensates for this asymmetry, through investigative journalism, audit institutions, and civil society watchdog organisations.

PPG addresses this asymmetry at the architectural level. The Transparency Default (Chapter~\ref{sec:framework}) reverses the access burden on governance action data: it is produced as a public record by the system's infrastructure, requiring no petitioning. The ZK proof architecture (Chapter~\ref{sec:architecture}) addresses the \emph{direction} of the asymmetry: by making citizen surveillance structurally impossible at the protocol level, it inverts the default from `state watches citizen' to `citizen watches state'. No policy decision is required to maintain this inversion; it is enforced by the cryptographic protocol itself.

\section{The Structural Logic of Conversion}
\label{sec:conversion}

The five deficits documented in this chapter share a common structural origin. Participation inequality, transparency gaps, accountability lag, influence concentration, and institutional opacity are not independent failures requiring independent fixes: they are jointly produced by a single architectural feature of representative democracy as it has been historically implemented. That feature is the \emph{mediation of citizen preferences through political parties}.

Political parties solve a genuine coordination problem. In a polity of millions, individual citizens cannot practically deliberate on every decision, evaluate every candidate, or monitor every administrative action in real time. Parties aggregate dispersed preferences into coherent platforms, recruit and vet candidates for office, and organise legislative activity into stable governing blocs \citep{downs1957economic}. These are genuine coordination functions, and the durability of party-based democracy across diverse institutional settings reflects their practical necessity---in the absence of any alternative aggregation mechanism.

The aggregation mechanism PPG provides renders the party's core epistemic function redundant. When every citizen can enter a proposal into a structured deliberative pipeline, engage directly with argument mapping and expert consultation, and cast a cryptographically verifiable vote on the refined output, the party is no longer the necessary intermediary between citizen preferences and policy outcomes. The pipeline performs preference aggregation continuously, transparently, and with symmetric access---properties that party-mediated systems structurally cannot provide, because party survival depends on differential access to agenda-setting, funding, and information. \citet{hayek1945use}'s epistemological argument about dispersed knowledge---cited earlier in the context of administrative design---applies here with equal force: a party platform is a centralised summary of preferences, and centralised summaries systematically lose the local, contextual, and minority knowledge that direct participatory processes retain. Arrow's impossibility theorem \citep{arrow1951social} establishes that no aggregation procedure can simultaneously satisfy all democratic desiderata; PPG's claim is not that the pipeline solves this problem but that direct preference aggregation under the same constraints loses less information than party-mediated summarisation.

This is not an argument that parties are malign. It is a structural observation: the historical conditions that made party-based preference aggregation the only viable model---mass polities without real-time communication infrastructure, without verifiable identity at scale, without cryptographic audit trails---no longer obtain. PPG operationalises what was previously only normatively desirable but technically infeasible: direct preference aggregation at civic scale.

\subsection{The Conversion Sequence}

The conversion is not a rupture but a progressive functional substitution, proceeding through the same three tiers as the deployment pathway (Chapter~\ref{sec:pathway}).

\textbf{At Tier 1}, existing institutional structures remain intact. Elected office holders continue in their roles. Parties continue to organise legislative activity. PPG handles bounded, specific decisions---participatory budgeting, local planning priorities, community resource allocation---within the scope permitted by existing administrative law. Office holders implement PPG decisions within their existing authority. The effect at Tier 1 is institutional augmentation, not replacement: a direct participation layer is added, and citizens begin developing governance competence in low-stakes contexts. The comparison with accountability is instructive: just as financial audit does not replace management but changes what management must answer for, a functioning PPG pipeline does not replace elected office holders but changes what they must answer for.

\textbf{At Tier 2}, the balance shifts. As the pipeline accumulates a legitimacy track record, the contrast between pipeline-legitimated decisions and party-driven decisions becomes publicly visible through the legitimacy score. Party affiliation becomes progressively less relevant to an office holder's accountability than their PPG legitimacy record. A minister whose decisions consistently score high participatory legitimacy is accountable to citizens directly, through the pipeline's audit trail, rather than primarily through party discipline. Simultaneously, the Merit-and-Mandate model (Section~\ref{sec:merit}) begins to apply to administrative and technical roles: positions previously filled by political appointment become open, competency-assessed appointments, defined by mandate scope and evaluated against measurable outcomes. The party's patronage function atrophies as professional accountability replaces political loyalty as the operative selection criterion.

\textbf{At Tier 3}, the constitutional settlement has shifted. Office holders remain: executive management of complex public systems requires dedicated professionals, as it does in any well-governed large organisation. What changes is the \emph{source and structure of their authority}. Authority no longer flows from party nomination and electoral plurality; it flows from two channels operating in parallel---demonstrated competence assessed through open, transparent process (the merit channel) and citizen mandate expressed through the PPG pipeline (the mandate channel). The role of an official at Tier 3 is structurally analogous to that of a senior executive in a stakeholder-governed organisation: they exercise professional judgment and implement decisions within a mandate set by the governing body---in this case, the citizen pipeline---and are accountable to two independent performance channels simultaneously. Office is a professional role like any other, with open selection, public performance records, and removal through structured process rather than through electoral cycles and party loyalty.

Parties that adapt to this architecture become what they are nominally claimed to be in liberal democratic theory: advocacy organisations that enter proposals into the pipeline, organise deliberative contributions, and compete on argument quality and evidence rather than on organisational machinery and campaign financing. Parties that do not adapt lose their functional rationale.

\subsection{The Self-Referential Foundation}

The architecture is self-evolving, and this property is the key structural response to the deepest objection to any governance reform proposal: \emph{who decides when the reform is working, and what comes next?}

In PPG, the answer is the same as the answer to every other governance question: the citizens, through the pipeline, with full deliberative record and verifiable vote. Because PPG's own parameters---quorum thresholds, deliberation periods, role definitions, mandate scopes---are themselves governed through the PPG pipeline (Chapter~\ref{sec:properties}), the framework does not require an external reform process to develop further. Citizens who observe that a particular role definition is too broad, that a quorum threshold is miscalibrated, or that a Merit-and-Mandate criterion is poorly specified can enter that observation as a proposal into the same pipeline that governs all other decisions. The system refines itself through the same mechanism it uses to govern everything else.

This self-referential property distinguishes PPG from historical governance reform proposals, which typically require a separate constitutional moment---a founding act of authority from outside the system---to initiate and sustain reform. PPG requires a founding act only to establish Tier 1; thereafter, the system's own iterative governance mechanism drives its evolution, calibrated continuously against empirical participation and legitimacy data. The conversion is not decreed from outside; it emerges from within.

% ══════════════════════════════════════════════════════════════════
\chapter{Framework Overview}
\label{sec:framework}

PPG structures governance as a sequential, multi-layer pipeline. Each layer has defined entry criteria, processes, and outputs, making the system auditable and programmable. This chapter presents the pipeline architecture, then specifies five foundational design commitments that apply across the entire framework: the Transparency Default, the Civic Participation Mandate, the Financial Transparency Ledger, the Constrained Administrative Authority specification, and the Merit-and-Mandate public role accountability model.

\section{Governance Layers}

\begin{enumerate}
  \item \textbf{Proposal Layer.} Citizens or delegated representatives submit structured proposals. Proposals must include: a problem statement, proposed action, estimated impact scope, affected stakeholders, and resource requirements.

  \item \textbf{Deliberation Layer.} Proposals enter a timed deliberation period. Structured deliberation includes argument mapping, expert consultation, and impact assessment. The outcome is a refined proposal with documented rationale.

  \item \textbf{Validation Layer.} A validation step verifies that proposals comply with constitutional constraints, procedural requirements, and legal frameworks. This layer can be partially automated.

  \item \textbf{Voting Layer.} Validated proposals proceed to verifiable citizen voting. Participation requirements (quorum) scale dynamically with estimated impact.

  \item \textbf{Execution Layer.} Approved decisions are executed through binding implementation mechanisms, with progress tracking and public reporting.

  \item \textbf{Veto Layer.} Citizens may trigger a veto on executed decisions within a defined window. Veto requires a threshold level of support, providing a reversibility mechanism without enabling minority obstruction.
\end{enumerate}

\section{Pipeline Visualisation}

\begin{center}
\begin{tikzpicture}[node distance=1.6cm, auto,
  box/.style={draw, rectangle, minimum width=3.2cm, minimum height=0.7cm,
              fill=blue!8, font=\small},
  arrow/.style={->, thick, blue!60!black}]

  \node (p)  [box] {1.\ Proposal};
  \node (d)  [box, below of=p] {2.\ Deliberation};
  \node (v)  [box, below of=d] {3.\ Validation};
  \node (vt) [box, below of=v] {4.\ Voting};
  \node (e)  [box, below of=vt] {5.\ Execution};
  \node (ve) [box, below of=e] {6.\ Veto};

  \draw[arrow] (p) -- (d);
  \draw[arrow] (d) -- (v);
  \draw[arrow] (v) -- (vt);
  \draw[arrow] (vt) -- (e);
  \draw[arrow] (e) -- (ve);
  \draw[arrow] (ve.east) to[out=0,in=0, looseness=2.2]
               node[right, font=\scriptsize] {reversal} (p.east);
\end{tikzpicture}
\end{center}

\noindent\textit{\small The diagram shows the happy-path sequence only. Rejection transitions---triggered by proposer withdrawal during deliberation, failed constitutional validation, or insufficient quorum or approval margin at the voting stage---are omitted for visual clarity. The complete transition function $\delta$ is specified in Table~\ref{tab:transitions} in Chapter~\ref{sec:formal}.}

\section{The Transparency Default}
\label{sec:inversion}

Open government research consistently identifies default opacity as a structural barrier to democratic accountability \citep{oecd2020open}. The OECD Open Government Principles identify proactive disclosure---publishing government information without requiring citizen requests---as a best practice with demonstrated benefits for public trust, reduced corruption, and improved policy responsiveness. Countries with strong proactive disclosure regimes (Sweden, Finland, and New Zealand score consistently highest) show measurably higher civic trust and engagement \citep{oecd2020open}.

PPG adopts a \textbf{Transparency Default} as a core design constraint:

\begin{mdframed}
\textbf{Transparency Default:} \textit{All governance actions---proposals, deliberation records, votes, execution records, veto registrations, legitimacy scores---are produced as public data by the system infrastructure. No separate access request is required. Citizen personal data is minimised to what is required for eligibility verification and is protected by zero-knowledge cryptographic proofs.}
\end{mdframed}

This is a design specification, not a normative manifesto. It is operationalised in the system architecture (Chapter~\ref{sec:architecture}) through specific technical mechanisms: immutable public append-only logs for all governance state transitions; zero-knowledge proof protocols for identity verification that expose no personal data beyond eligibility; and content-addressed permanent archival of all deliberation records.

The privacy side of the constraint is equally specific. Drawing on Arendt's distinction between the public realm of collective action and the private realm of individual life \citep{arendt1958human}, and on current data minimisation principles in the privacy engineering literature \citep{boneh2020graduate}, PPG specifies that the governance infrastructure should not be capable of tracking individual participation patterns, voting histories, or personal identification. The zero-knowledge proof design achieves this: the system verifies that a participant is an eligible citizen without recording \emph{which} eligible citizen they are.

Operationally, the Transparency Default means:

\begin{itemize}
  \item All proposal texts, deliberation arguments, expert assessments, vote tallies, and execution records are public records, produced in machine-readable open formats.
  \item All legitimacy scores and governance metadata are publicly queryable via open API.
  \item No individual participation record, vote direction, or identity link is stored by any system component.
  \item The audit trail is append-only and cryptographically signed, preventing retroactive modification.
\end{itemize}

This design is calibrated against the OECD Open Government best practice standard \citep{oecd2020open} and against current Privacy by Design principles \citep{boneh2020graduate}, rather than derived from any particular political philosophy.

\section{Civic Participation Mandate}
\label{sec:mandate}

Most democratic systems treat participation in governance as a right: citizens may vote, propose, and deliberate, but are under no legal or institutional obligation to do so. The result---documented extensively in the participation inequality literature (Chapter~\ref{sec:problem})---is systematic under-participation by lower-income, lower-education, and younger citizens: precisely those whose interests are most weakly represented in existing governance structures \citep{lijphart1997unequal, schlozman2012unheavenly}.

PPG proposes a configurable \textbf{Civic Participation Mandate} as an optional design module, drawing on an analogy to the civic obligation to file tax returns. The analogy is instructive: tax filing is compulsory not because the state distrusts citizens, but because accurate collective information about income flows is a necessary input to managing shared fiscal infrastructure. Governance participation provides an analogous input: the aggregate of citizen preferences is the necessary signal on which legitimate collective decisions rest. Non-participation is not a neutral absence---it systematically skews that signal, with predictable distributional consequences \citep{lijphart1997unequal}.

The mandate operates as follows:

\begin{itemize}
  \item Eligible citizens are notified of each governance decision within their scope and are obligated to register their participation status within the voting window.
  \item \textbf{A blank or abstain vote constitutes full participation.} The obligation is to engage with the process---to register ``I have been informed of this decision and decline to direct it''---not to hold or express any particular opinion. Compelled opinion is not a democratic value; compelled engagement with the civic process is a civic contribution of the same kind as tax filing or jury service.
  \item Participation is verified through the zero-knowledge eligibility system: the system records that an eligible citizen participated, but not how they voted or whether their vote was blank.
  \item The specific enforcement mechanism is a framework parameter: jurisdictions may implement social notification (public participation records showing whether a citizen participated, but not how), minor financial incentive/disincentive structures, or voluntary systems with quorum-weighted legitimacy scoring. PPG specifies the mechanism; the deployment context determines the enforcement model.
\end{itemize}

Empirical evidence from compulsory voting systems supports the principle. \citet{lijphart1997unequal} analyses participation data across 36 democracies and finds that compulsory voting substantially reduces the socioeconomic participation gap: the differential between high-income and low-income participation rates narrows significantly where voting is mandatory. Australia has operated compulsory voting since 1924, consistently achieving turnout above 90\% \citep{idea2021global}. The blank vote provision is critical: it ensures the mandate does not compel political expression---a distinction confirmed in Australian constitutional jurisprudence.

The Civic Participation Mandate is itself a governance decision: its adoption within a PPG deployment is subject to the same participatory process it governs.

\section{Financial Transparency Ledger}
\label{sec:ledger}

The Transparency Default (Section~\ref{sec:inversion}) specifies that all governance actions are public records. The Financial Transparency Ledger extends this principle to its most consequential domain: the complete, real-time transparency of every flow of public funds.

\begin{mdframed}
\textbf{Financial Transparency Ledger:} \textit{Every transaction from public funds---salary payments, contract awards, expense claims, budget allocations, inter-departmental transfers, grant disbursements, and debt instruments---is recorded in real time as an immutable, machine-readable public entry. No minimum transaction threshold applies. No category is exempt from recording, though specific fields may be classified under narrowly defined and publicly logged security exceptions.}
\end{mdframed}

The current state of public financial transparency falls substantially short of this standard. Most jurisdictions publish aggregated budget documents and post-hoc annual reports; some publish contract award notices above threshold amounts; few publish transaction-level real-time data. Information is delayed, aggregated, inconsistently formatted, and frequently subject to broad exemptions \citep{oecd2020open}. The Open Contracting Data Standard \citep{ocp2019ocds}, adopted in over 50 countries, represents significant progress toward machine-readable public contracting data, but it covers procurement only---not the full spectrum of public financial flows.

The PPG Financial Transparency Ledger specifies:

\begin{itemize}
  \item \textbf{Transaction-level granularity.} Every individual payment recorded separately, not as aggregated budget categories.
  \item \textbf{Real-time publication.} Transactions published within 24 hours of execution, not in quarterly or annual reports.
  \item \textbf{Immutable audit trail.} All entries cryptographically signed at the time of recording. Corrections and reversals are recorded as new entries referencing the original---not as retroactive modifications.
  \item \textbf{Machine-readable open format.} All entries conform to a published open schema, enabling automated analysis, anomaly detection, and independent audit.
  \item \textbf{Classified transaction protocol.} Transactions exempt from full disclosure for operational security reasons are logged as redacted entries: the existence, approximate category, responsible department, and applicable legal exemption are recorded; only specific fields are withheld. The redaction itself is public.
  \item \textbf{Budget-to-actuals linkage.} Each transaction is linked to the approved budget line authorising it. Expenditure outside approved lines triggers an automatic flag in the public record, providing a direct basis for citizen veto petition.
\end{itemize}

The Financial Transparency Ledger does not require blockchain infrastructure: a cryptographically signed, append-only database maintained by an independent audit institution satisfies the same functional properties. In a decentralised PPG implementation (Chapter~\ref{sec:decentralised}), the ledger is an on-chain public record, providing the strongest available tamper-resistance guarantee. Crucially, this specification has no dependency on the rest of the PPG framework: it can be adopted as a standalone open-government measure by any jurisdiction, providing an incremental adoption path.

Systematic analysis of public financial data identifies transaction-level transparency as among the most effective anti-corruption mechanisms available \citep{oecd2020open}: it is preventive rather than detective, continuous rather than episodic, and scales without requiring additional institutional enforcement capacity. The Financial Transparency Ledger is the state's financial system operating in full public view---precisely the inversion of current default opacity.

\section{Constrained Administrative Authority}
\label{sec:constrained}

Contemporary governance systems delegate substantial discretionary authority to administrative agencies: departments interpret their mandates expansively, develop internal policies without explicit citizen authorisation, and exercise regulatory powers whose aggregate scope far exceeds what citizens approve in any single election \citep{norris2011democratic}. This administrative expansion is a structural accountability gap---it operates between elections, in institutional spaces that the veto mechanisms of Swiss-style direct democracy were not designed to reach.

PPG addresses this through the principle of \textbf{Constrained Administrative Authority}:

\begin{mdframed}
\textbf{Constrained Administrative Authority:} \textit{Public bodies exercise deliberative and policy-making powers only within the scope explicitly mandated by the citizen governance process. All internal deliberations are public records subject to the Transparency Default. All policy outputs are eligible for citizen veto review within the standard window.}
\end{mdframed}

This principle has three components:

\textbf{Mandate-bounded scope.} Each public body's authority is defined by the specific mandate approved through the governance pipeline. Scope expansion requires a new governance decision. The civil service role is to implement approved mandates competently---not to interpret and extend them autonomously. Professional judgment within mandate scope is not only permitted but required for effective implementation; the constraint is on the boundary between implementation and unsanctioned policy-making.

\textbf{Transparent deliberations.} Internal deliberations of public bodies---committee meetings, inter-departmental consultations, policy drafting processes---are public records subject to the same Transparency Default as the governance pipeline. A short publication delay (configurable, suggested maximum 72 hours) may apply for operational reasons, but the permanent record is public. This makes the reasoning behind administrative decisions visible to citizens and researchers, enabling identification of mandate drift and providing an evidence base for veto petitions.

The transparency of deliberations serves a mechanism function beyond accountability: it raises public awareness. Citizens who would not have initiated a veto in the absence of information may do so when internal deliberation records surface the reasoning behind a consequential decision. Transparency is not only a property of the system---it is the activation mechanism for civic engagement. \citet{scott1998seeing} observes that administrative systems rendered opaque and simplified for managerial convenience systematically destroy the practical knowledge that effective oversight requires; making deliberations public is the precondition for citizens exercising the \textit{m\^etis}---local, contextual knowledge---that formal processes cannot substitute.

\textbf{Veto eligibility.} Any policy decision taken by a public body within its mandate is eligible for citizen veto review through the standard mechanism (Chapter~\ref{sec:formal}). The veto threshold scales to the impact of the decision via the dynamic quorum formula, preventing minority abuse while ensuring that significant public concern can trigger review. This is the digital-scale equivalent of the Swiss optional referendum---a continuous between-election accountability mechanism extending to administrative, not only legislative, decisions.

\section{Merit-and-Mandate: Public Role Accountability}
\label{sec:merit}

The accountability of individuals in public roles is a persistent structural weakness in governance. Elected officials face accountability at defined intervals but are insulated between elections; appointed officials face weak accountability to the public and primarily to their political superiors. Neither model produces strong incentives for sustained competence or public service orientation \citep{rauch2000bureaucratic}. Merit-based bureaucratic structures are associated with significantly better governance outcomes \citep{rauch2000bureaucratic}, but existing merit models provide no mechanism for the public to remove incompetent officials outside the political chain of command.

PPG specifies a \textbf{Merit-and-Mandate} model applying dual accountability to all public roles:

\begin{mdframed}
\textbf{Merit-and-Mandate:} \textit{Public roles are filled on the basis of demonstrated competence, assessed by independent process. Role holders are subject to removal through two independent channels: professional competence review and citizen mandate. Both channels operate continuously, not episodically.}
\end{mdframed}

\textbf{Appointment by merit.} All public roles are filled through a transparent, competency-based selection process. Role definitions, selection criteria, candidate applications, assessment results, and appointment rationale are public records. Appointment panels are independent of political direction. Track record, qualifications, and demonstrated performance are the primary selection criteria.

\textbf{Performance transparency.} Each role holder has a public performance record: mandate scope, decisions taken, resources managed, and measurable outcomes. This record is part of the Financial Transparency Ledger and the governance pipeline's audit trail, updated continuously.

\textbf{Removal by citizen mandate.} Citizens may initiate a removal petition for any public role holder through the standard veto mechanism. The threshold is set higher than a standard policy veto (reflecting the disruption cost of personnel removal) but the mechanism is identical. Valid removal grounds are specified and constrained: documented incompetence, corruption, or sustained operation outside mandate. Political disagreement with the official's decisions is not a valid removal ground---this protection is enforced by an independent accountability tribunal.

\textbf{Protection from political interference.} Public officials are protected from removal for: political views, whistleblowing, correct application of law, or refusal to implement unlawful instructions. This protection runs to the public and the independent tribunal, not to the political executive. The result threads a specific needle: officials cannot be removed by political superiors for doing their jobs correctly, but they can be removed by the public for not doing their jobs at all.

The Merit-and-Mandate model draws on several existing institutional precedents: judicial retention elections (used in 21 US states and multiple Latin American jurisdictions), Swiss recall mechanisms (\textit{Abberufungsrecht}), and senior civil service performance accountability models in New Zealand and Australia. PPG generalises these precedents into a unified architecture applicable to all public roles regardless of jurisdiction.

The systemic significance of Merit-and-Mandate within the framework's conversion logic is analysed in Section~\ref{sec:conversion}: it is the mechanism through which the party patronage function atrophies at Tier 2 as professional accountability replaces political loyalty as the operative selection criterion. A formal principal--agent treatment of the dual accountability structure---modelling the triadic relationship between citizen principals, role-holder agents, and the independent accountability tribunal---is deferred to post-deployment analysis.

% ══════════════════════════════════════════════════════════════════
\chapter{Formal Model}
\label{sec:formal}

We now formalise the governance system.

\section{State Machine}

\begin{definition}[Governance State Machine]
A governance system is a tuple $\mathcal{G} = (S, s_0, \Sigma, \delta, F)$ where:
\begin{itemize}
  \item $S = \{\textit{Proposed}, \textit{Deliberating}, \textit{Validating}, \textit{Voting}, \textit{Executed}, \textit{Vetoed}, \textit{Rejected}\}$ is the set of states.
  \item $s_0 = \textit{Proposed}$ is the initial state.
  \item $\Sigma$ is the set of triggering events (citizen actions, timer expiry, threshold crossings).
  \item $\delta: S \times \Sigma \to S$ is the transition function.
  \item $F = \{\textit{Executed}, \textit{Rejected}\}$ is the set of absorbing (terminal) states. \textit{Vetoed} is a transient reset state that transitions unconditionally back to \textit{Proposed} for a new governance cycle and does not terminate the process. \textit{Executed} becomes absorbing only upon veto window expiry; prior to that, it may transition to \textit{Vetoed} if the veto threshold is reached (see Table~\ref{tab:transitions}).
\end{itemize}
\end{definition}

Table~\ref{tab:transitions} enumerates all transitions of $\delta$, giving an unambiguous implementation specification for each state change.

\begin{table}[htbp]
\centering
\caption{Complete state transition function $\delta: S \times \Sigma \to S$ for the PPG governance state machine.}
\label{tab:transitions}
\begin{tabular}{@{} l l l @{}}
\toprule
\textbf{Current state} & \textbf{Triggering event} $(\in \Sigma)$ & \textbf{Next state} \\
\midrule
$\textit{Proposed}$ & Submission accepted; deliberation period opens & $\textit{Deliberating}$ \\
$\textit{Deliberating}$ & Deliberation period expires; validation opens & $\textit{Validating}$ \\
$\textit{Deliberating}$ & Proposer withdrawal & $\textit{Rejected}$ \\
$\textit{Validating}$ & Constitutional compliance confirmed & $\textit{Voting}$ \\
$\textit{Validating}$ & Constitutional compliance failed & $\textit{Rejected}$ \\
$\textit{Voting}$ & $R \geq Q(i)$ and $V_{\text{for}}/V > 0.5$ & $\textit{Executed}$ \\
$\textit{Voting}$ & $R < Q(i)$ or $V_{\text{for}}/V \leq 0.5$ & $\textit{Rejected}$ \\
$\textit{Executed}$ & $V_{\text{veto}}/P \geq Q_{\text{veto}}$ within window $\tau$ & $\textit{Vetoed}$ \\
$\textit{Executed}$ & Veto window $\tau$ expires without threshold & $\textit{Executed}$ (terminal) \\
$\textit{Vetoed}$ & Revised resubmission & $\textit{Proposed}$ (new cycle) \\
$\textit{Rejected}$ & --- & $\textit{Rejected}$ (terminal) \\
\bottomrule
\end{tabular}
\end{table}

\section{Participation and Quorum}

Let:
\begin{align}
  P &= \text{Total eligible population} \\
  V &= \text{Total votes cast on a proposal} \\
  V_{\text{for}} &= \text{Votes in favour} \\
  R &= \frac{V}{P} \quad \text{(participation rate)} \\
  Q &= \text{Quorum threshold} \quad (0 < Q < 1)
\end{align}

A proposal passes if and only if:
\begin{equation}
  \label{eq:pass}
  R \geq Q \quad \text{and} \quad \frac{V_{\text{for}}}{V} > 0.5
\end{equation}

\section{Dynamic Quorum}

A fixed quorum is inadequate: low-impact decisions should not require the same mobilisation as high-impact ones. We define a dynamic quorum:

\begin{equation}
  \label{eq:dynamic_quorum}
  Q(i) = Q_{\text{base}} + \alpha \cdot \sigma(i)
\end{equation}

where:
\begin{itemize}
  \item $Q_{\text{base}} \in (0, 1)$ is the baseline quorum (e.g., 0.20).
  \item $\alpha \in [0, 1]$ is the impact sensitivity parameter, subject to the joint constraint $Q_{\text{base}} + \alpha \leq 1$ to ensure $Q(i) \leq 1$ for all~$i$.
  \item $\sigma(i) \in [0, 1]$ is the normalised impact score of proposal $i$, assessed during the validation layer as a weighted combination of three inputs: (i)~estimated financial magnitude, normalised against the governance context's budget baseline; (ii)~estimated population affected, normalised against the total eligible population; and (iii)~reversibility class, mapped from a three-level ordinal (low, medium, high) to $[0, 1]$. The weighting of these inputs is a governance parameter, configurable through the same participatory process as $Q_{\text{base}}$ and $\alpha$.
\end{itemize}

\begin{proposition}
\label{prop:quorum}
For any $\alpha, Q_{\text{base}} > 0$, the dynamic quorum $Q(i)$ satisfies $Q_{\text{base}} \leq Q(i) \leq Q_{\text{base}} + \alpha$, ensuring that baseline participation is always required and that high-impact decisions require greater participation.
\end{proposition}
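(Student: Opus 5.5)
The plan is to derive both inequalities directly from the definition of the dynamic quorum in equation~\eqref{eq:dynamic_quorum}, using only the stated ranges of the parameters. The fact that does all the work is $\sigma(i) \in [0,1]$ for every proposal $i$, which is guaranteed by the normalisation of the impact score in the validation layer. The hypothesis $\alpha > 0$ makes the map $s \mapsto Q_{\text{base}} + \alpha s$ strictly increasing on $[0,1]$. Its extreme values are therefore attained at the endpoints $s = 0$ and $s = 1$.

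The first step is the lower bound. Since $\alpha > 0$ and $\sigma(i) \geq 0$, we have $\alpha \cdot \sigma(i) \geq 0$, so $Q(i) = Q_{\text{base}} + \alpha\sigma(i) \geq Q_{\text{base}}$. The second step is the upper bound. Multiplying $\sigma(i) \leq 1$ by $\alpha > 0$, which preserves the inequality, gives $\alpha\sigma(i) \leq \alpha$, hence $Q(i) \leq Q_{\text{base}} + \alpha$. I will also note that the joint constraint $Q_{\text{base}} + \alpha \leq 1$ yields $Q(i) \leq 1$. Together with $Q_{\text{base}} > 0$, this shows $Q(i) \in (0,1]$ is a meaningful participation threshold compatible with the pass condition~\eqref{eq:pass}.

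The third step turns the interpretive clause ``high-impact decisions require greater participation'' into a precise statement: $\sigma(i) < \sigma(j)$ implies $Q(i) < Q(j)$. This follows because $Q(j) - Q(i) = \alpha(\sigma(j) - \sigma(i)) > 0$. The same identity shows that $Q$ attains $Q_{\text{base}}$ exactly at $\sigma(i) = 0$ and attains $Q_{\text{base}} + \alpha$ exactly at $\sigma(i) = 1$, so both bounds are sharp.

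There is no genuine technical obstacle here. The only delicate point is how to read the informal second half of the statement. I will formalise it as strict monotonicity in $\sigma$, which needs $\alpha > 0$; with $\alpha = 0$ the quorum is constant and only weak monotonicity holds. I will also point out that the boundedness part holds for all $\alpha \geq 0$, and that it relies on $\sigma$ being normalised to $[0,1]$ rather than being an unnormalised weighted sum of the three impact inputs.
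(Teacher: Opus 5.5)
Your proof is correct and takes the same route as the paper: you substitute $\sigma(i)\in[0,1]$ into $Q(i)=Q_{\text{base}}+\alpha\cdot\sigma(i)$ to get both bounds, and you show tightness at $\sigma(i)=0$ and $\sigma(i)=1$. Your extra formalisation of the clause ``high-impact decisions require greater participation'' as strict monotonicity, via $Q(j)-Q(i)=\alpha(\sigma(j)-\sigma(i))$, goes a little beyond the paper's sketch, which leaves that clause informal, and it is sound.
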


\begin{proof}[Proof sketch]
Since $\sigma(i) \in [0,1]$ by definition, $Q(i) = Q_{\text{base}} + \alpha\cdot\sigma(i) \in [Q_{\text{base}},\, Q_{\text{base}}+\alpha]$ for all $\alpha, Q_{\text{base}} > 0$. The bounds are tight: $\sigma(i)=0$ yields $Q(i)=Q_{\text{base}}$; $\sigma(i)=1$ yields $Q(i)=Q_{\text{base}}+\alpha$. \qed
\end{proof}

The empirical relationship between quorum threshold and mean participation is illustrated by the simulation results in Figure~\ref{fig:quorum_sensitivity}a (Chapter~\ref{sec:simulation}).

\section{Veto Mechanism}

Following execution, a decision enters the veto window of duration $\tau$. During this window, a veto is triggered if:

\begin{equation}
  \label{eq:veto}
  \frac{V_{\text{veto}}}{P} \geq Q_{\text{veto}}
\end{equation}

where $V_{\text{veto}}$ is the number of citizens registering a veto and $Q_{\text{veto}}$ is the veto quorum (typically higher than $Q$ to prevent minority reversal of majority decisions). A successful veto returns the proposal to state \textit{Vetoed}, triggering re-deliberation.

\section{Legitimacy Score}

We define a composite legitimacy score for a decision $d$:

\begin{equation}
  \label{eq:legitimacy}
  L(d) = w_1 \cdot R(d) + w_2 \cdot \frac{V_{\text{for}}(d)}{V(d)} + w_3 \cdot \Delta_{\text{delib}}(d)
\end{equation}

where $w_1 + w_2 + w_3 = 1$, and $\Delta_{\text{delib}}(d) \in [0,1]$ is a structured deliberation quality score decomposed into four measurable components:

\begin{equation}
  \label{eq:delib}
  \Delta_{\text{delib}}(d) = a_1 \cdot A(d) + a_2 \cdot E(d) + a_3 \cdot D(d) + a_4 \cdot I(d), \qquad a_1 + a_2 + a_3 + a_4 = 1
\end{equation}

where:
\begin{itemize}
  \item $A(d) \in [0,1]$: \textbf{argument diversity} --- the number of distinct argument positions recorded in the deliberation archive, normalised by the maximum observed across comparable decision categories.
  \item $E(d) \in [0,1]$: \textbf{expert coverage} --- the fraction of identified impact domains that received formal expert assessment during deliberation.
  \item $D(d) \in [0,1]$: \textbf{deliberation depth} --- the number of proposal revision cycles, normalised by the category maximum, indicating iterative refinement rather than initial adoption.
  \item $I(d) \in [0,1]$: \textbf{inclusive participation} --- the fraction of recognised stakeholder groups that contributed at least one argument.
\end{itemize}

All four components are computable directly from the deliberation archive without subjective quality judgements. The weights $a_1$--$a_4$ and the outer weights $w_1$--$w_3$ require empirical calibration against participant outcome data; their estimation is identified as a first-priority research output of Tier~1 deployments (Chapter~\ref{sec:pathway}). This composite score enables comparative evaluation of decisions across governance instances and longitudinal tracking of institutional deliberation health.

% ══════════════════════════════════════════════════════════════════
\chapter{Identity Model}
\label{sec:identity}

The correctness of the participation rate $R = V/P$ depends critically on $P$ being accurate---that is, that each eligible person is counted exactly once. This is the identity problem.

\section{Threat Model}

Digital governance faces three primary identity threats:

\begin{enumerate}
  \item \textbf{Sybil Attacks.} A single actor creates multiple identities, inflating their voting weight \citep{douceur2002sybil}.
  \item \textbf{Identity Duplication.} Legitimate identities are duplicated across jurisdictions or systems.
  \item \textbf{Coercion.} Eligible citizens are coerced into voting in particular ways or selling their votes.
\end{enumerate}

\section{Identity Model Comparison}

Table~\ref{tab:identity} compares four identity approaches across the four properties most relevant to governance-scale deployment.

\begin{table}[htbp]
\centering
\caption{Comparison of identity models for digital governance.}
\begin{tabular}{@{} p{3.5cm} p{2.2cm} p{2.2cm} p{2.2cm} p{2.2cm} @{}}
\toprule
\textbf{Model} & \textbf{Sybil Resistance} & \textbf{Privacy} & \textbf{Scalability} & \textbf{Uniqueness} \\
\midrule
Centralised (state ID)     & High & Low  & High & High \\
Decentralised (self-sovereign) & Medium & High & Medium & Low \\
Proof of Personhood (biometric) & Very High & Very Low & High & Very High \\
Social Graph Verification  & Medium--High & Medium & Medium & Medium \\
Hybrid (proposed)          & High & Medium--High & High & High \\
\bottomrule
\end{tabular}
\label{tab:identity}
\end{table}

\section{Proposed Hybrid Model}

We propose a hybrid identity approach combining:

\begin{itemize}
  \item \textbf{Social-graph verification.} Graph-based algorithms identify clusters of mutually-vouching identities, making large-scale Sybil attacks computationally expensive \citep{pentland2014social}.
  \item \textbf{Cryptographic proofs.} Zero-knowledge proofs allow citizens to prove eligibility (e.g., residency, age) without revealing identifying information \citep{boneh2020graduate}.
  \item \textbf{Periodic revalidation.} Identities are revalidated on a rolling basis, preventing accumulation of dormant fake accounts.
\end{itemize}

The privacy-accountability trade-off is unavoidable: stronger identity verification generally requires more personal data. PPG positions this trade-off explicitly in the governance configuration, allowing different jurisdictions to make different choices within defined bounds.

A specific limitation of the SBT-based model deserves noting here. SBT non-transferability holds only as long as the credential holder maintains exclusive control of their private key: a compromised or delegated key effectively transfers the governance credential, undermining both the Sybil resistance and coercion-resistance properties the model is designed to provide. Key management failures are the primary security failure mode in deployed cryptographic systems at population scale. The full scope of this limitation and the mitigation strategies available at each deployment tier are addressed in Section~\ref{sec:limitations}.

% ══════════════════════════════════════════════════════════════════
\chapter{System Architecture}
\label{sec:architecture}

The PPG system is structured as three logical layers. Each layer has a defined boundary, a single responsibility, and explicit interfaces to adjacent layers. The layered design enforces the public/private separation that is the architectural expression of the Transparency Default (Section~\ref{sec:inversion}): public governance data flows upward through all layers; private citizen identity data never crosses the boundary between the Identity Layer and the Governance Engine.

\section{Identity Layer}

The Identity Layer is responsible for one function: determining whether each presenting citizen is eligible to participate in a given governance decision, and communicating that determination as a binary signal without exposing any personal identifier.

Internally, the Identity Layer maintains three components. A \textbf{hashed eligibility registry} stores externally attested credentials---civic registration, residency attestations, or social-graph verification outputs---as one-way cryptographic commitments; no raw credential data is retained beyond the commitment hash, so a registry breach exposes no personal information. A \textbf{social-graph analysis component} processes the mutual-attestation network to compute Sybil resistance scores and flag potentially duplicated identities (Section~\ref{sec:identity}). A \textbf{ZK proof verification service} accepts zero-knowledge eligibility proofs submitted by citizens, validates them against the commitment registry, and returns a yes/no eligibility signal without recovering the underlying identity.

The Identity Layer exposes two values to the Governance Engine for each active decision: the total eligible population count $P$ for quorum evaluation, and a per-decision participation counter that increments by one for each validated proof. To prevent any citizen from casting more than one vote per decision, each ZK proof incorporates a \emph{participation nullifier}---a deterministic commitment derived from the citizen's credential and the specific decision identifier. The nullifier is published alongside the proof; any subsequent submission producing the same nullifier is rejected by the Governance Engine. Following the Semaphore nullifier construction \citep{boneh2020graduate}, the commitment reveals nothing about the underlying credential, preserving participation anonymity while guaranteeing vote uniqueness. No individual identifier, participation history, or vote direction is communicated across the Identity Layer boundary. The eligible count $P$ is recomputed each governance cycle following a periodic revalidation sweep that removes expired, flagged, or unrenewed credentials.

\section{Governance Engine}

The Governance Engine implements the formal state machine of Chapter~\ref{sec:formal}. Its responsibilities span the full decision lifecycle.

\textbf{Proposal management.} On receiving a new submission from the Interface Layer, the Engine validates that all required fields are present (problem statement, proposed action, impact scope, affected stakeholders, resource requirements) and creates a proposal record in state \textit{Proposed}. The proposal identifier, submission timestamp, and submitter eligibility proof hash (not identity) are appended to the public audit log.

\textbf{Deliberation coordination.} The Engine opens the deliberation period and manages its timer. It receives structured argument contributions from the Interface Layer, validates their format, and appends them to the public deliberation archive. At deliberation close, it computes the argument diversity and deliberation depth components of the quality score $\Delta_{\text{delib}}$ and triggers constitutional validation.

\textbf{Quorum enforcement.} When a vote closes, the Engine evaluates Equations~\ref{eq:pass}--\ref{eq:dynamic_quorum} atomically: it reads the eligible count $P$ from the Identity Layer, counts validated vote signals, and applies the dynamic quorum threshold $Q(i)$. State transitions occur only when the threshold condition is satisfied; no state change can be made by any operator outside this enforcement logic.

\textbf{Veto registry.} Following execution, the Engine opens the veto window of duration $\tau$ and accepts veto registrations from the Interface Layer. When the window closes, it evaluates Equation~\ref{eq:veto} and transitions to \textit{Vetoed} or confirms \textit{Executed} (terminal) accordingly.

\textbf{Audit log.} Every state transition---including the triggering event, timestamp, and relevant counts---is appended to a cryptographically signed public log. The log is append-only; no entry can be modified or deleted retroactively. The complete lifecycle history of every proposal is publicly queryable at any time.

\section{Interface Layer}

The Interface Layer translates citizen intent into structured inputs for the Governance Engine and presents governance state in accessible form. It holds no persistent governance state of its own.

\textbf{Submission interface.} A structured form guides citizens through the required proposal fields with contextual help and validation. Completed submissions are forwarded to the Governance Engine; the Interface Layer retains no copy.

\textbf{Deliberation interface.} During the deliberation period, the Interface Layer presents the current argument map and allows citizens to contribute arguments tagged by type (factual claim, normative claim, procedural objection, proposed amendment). Expert assessments are surfaced alongside citizen arguments. A public archive, queryable by any citizen, presents the full argument record chronologically and by type, and feeds the automated argument diversity scoring.

\textbf{Voting interface.} At vote open, the Interface Layer locally constructs the citizen's ZK eligibility proof from their credential wallet, submits it to the Identity Layer for verification, and---on receiving a positive eligibility signal---presents the ballot. The vote direction is submitted directly to the Governance Engine; the Interface Layer records only that a proof was submitted, not its outcome or the vote direction.

\textbf{Veto and notification interfaces.} During the veto window, citizens register a veto through the same ZK proof flow. The notification service alerts eligible citizens to newly opened deliberation, voting, and veto windows within their scope.

\textbf{Accessibility requirements.} The Interface Layer is specified to WCAG~2.1~AA standards, with low-bandwidth modes, screen reader compatibility, multi-language support, and a low-literacy interaction path. Physical participation equivalents---accessible terminals at civic facilities, in-person assistance programmes---connect to the same Interface Layer API, ensuring non-digital participation is recorded in the same public audit log as digital participation.

\begin{figure}[htbp]
\centering
\begin{tikzpicture}[node distance=1.4cm,
  layer/.style={draw, rectangle, minimum width=9cm, minimum height=1cm,
                fill=blue!6, font=\small\bfseries, text centered}]
  \node (ui)  [layer] {Interface Layer --- citizen portal, mobile, accessibility};
  \node (eng) [layer, below of=ui, node distance=1.6cm]
              {Governance Engine --- state machine, quorum, deliberation, veto};
  \node (id)  [layer, below of=eng, node distance=1.6cm]
              {Identity Layer --- registration, ZK proofs, social graph};

  \draw[<->, thick, blue!50] (ui) -- (eng);
  \draw[<->, thick, blue!50] (eng) -- (id);
\end{tikzpicture}
\caption{Three-layer PPG system architecture.}
\label{fig:architecture}
\end{figure}

\section{Public/Private Separation by Design}
\label{sec:pubpriv}

The Transparency Default (Section~\ref{sec:inversion}) is operationalised in the three-layer architecture through specific data-flow constraints at each layer. The Identity Layer stores minimum necessary citizen data---hashed eligibility attestations---and exposes only aggregate eligible counts to the Governance Engine; no individual voting record, participation history, or personal identifier crosses this boundary. The Governance Engine records all proposal states, deliberation contributions, vote tallies, execution logs, and veto registrations as public append-only data; every state transition is timestamped, cryptographically signed by the executing contract, and publicly queryable. There is no private state in the Governance Engine. The Interface Layer authenticates citizen eligibility via ZK proof---receiving only a yes/no eligibility signal from the Identity Layer---and records no session data beyond what is needed to provide the interface service.

This design pattern follows Privacy by Design principles \citep{boneh2020graduate}: privacy protection is built into the system architecture rather than bolted on as a policy layer. A technical guarantee is stronger than a legal guarantee: a policy prohibition on surveillance can be reversed by a subsequent decision; an architectural impossibility cannot. This distinction is significant for governance systems that may operate across long timescales and through changes in political leadership.

% ══════════════════════════════════════════════════════════════════
\chapter{Simulation Methodology and Results}
\label{sec:simulation}

To validate the framework's dynamic properties, we constructed a stochastic simulation with three behavioural archetypes operating over 100 decision rounds.

\begin{mdframed}
\textbf{Modelling scope.} This is a stylised stochastic model, not a full agent-based model: each archetype is characterised by a fixed base participation rate and a random perturbation around it. Agents do not interact with one another, hold no memory of prior rounds, and do not update strategies in response to outcomes. The model is designed to validate the existence of parameter regimes in which the framework's dynamic properties hold (stable approval convergence, functional veto dynamics, manipulation deterrence), not to replicate the adaptive complexity of real populations. A full agent-based model with social-network structure, issue-salience dynamics, and strategic learning is an objective of future empirical work.
\end{mdframed}

\section{Simulation Parameters}

Table~\ref{tab:simparams} specifies all quantitative parameters used in the simulation experiments. The qualitative results---participation convergence, veto stability, and manipulation deterrence---are robust to moderate perturbation of these values; the specific settings represent a plausible baseline deployment scenario at municipal scale.

\begin{table}[htbp]
\centering
\caption{Simulation parameters.}
\label{tab:simparams}
\begin{tabular}{@{} l l p{6.5cm} @{}}
\toprule
\textbf{Parameter} & \textbf{Value} & \textbf{Description} \\
\midrule
$T$ & 100 & Number of simulated decision rounds \\
$Q_{\text{base}}$ & $\{0.2,\,0.3,\,0.4\}$ & Baseline quorum threshold values tested \\
Passive rate & 0.14 & Base participation rate, passive archetype \\
Active rate  & 0.34 & Base participation rate, active archetype \\
Strategic rate & 0.52 & Base participation rate, strategic archetype \\
$g_{\max}$ & 0.80 & Maximum expected utility gain from manipulation \\
$k$         & 6.0  & Gain function steepness parameter \\
$c_1$       & 0.55 & Quadratic mobilisation cost coefficient \\
$c_2$       & 0.08 & Linear mobilisation cost coefficient \\
\bottomrule
\end{tabular}
\end{table}

\section{Agent Archetypes}

\begin{description}
  \item[Passive Agents] Participate rarely and only when issues are highly salient. Base participation rate $\approx 0.14$, increasing slowly over time.
  \item[Active Agents] Regularly engaged citizens with consistent participation. Base participation rate $\approx 0.34$.
  \item[Strategic Agents] Self-interested actors who coordinate participation to influence outcomes. Base participation rate $\approx 0.52$.
\end{description}

\begin{figure}[htbp]
\centering
\includegraphics[width=0.85\textwidth]{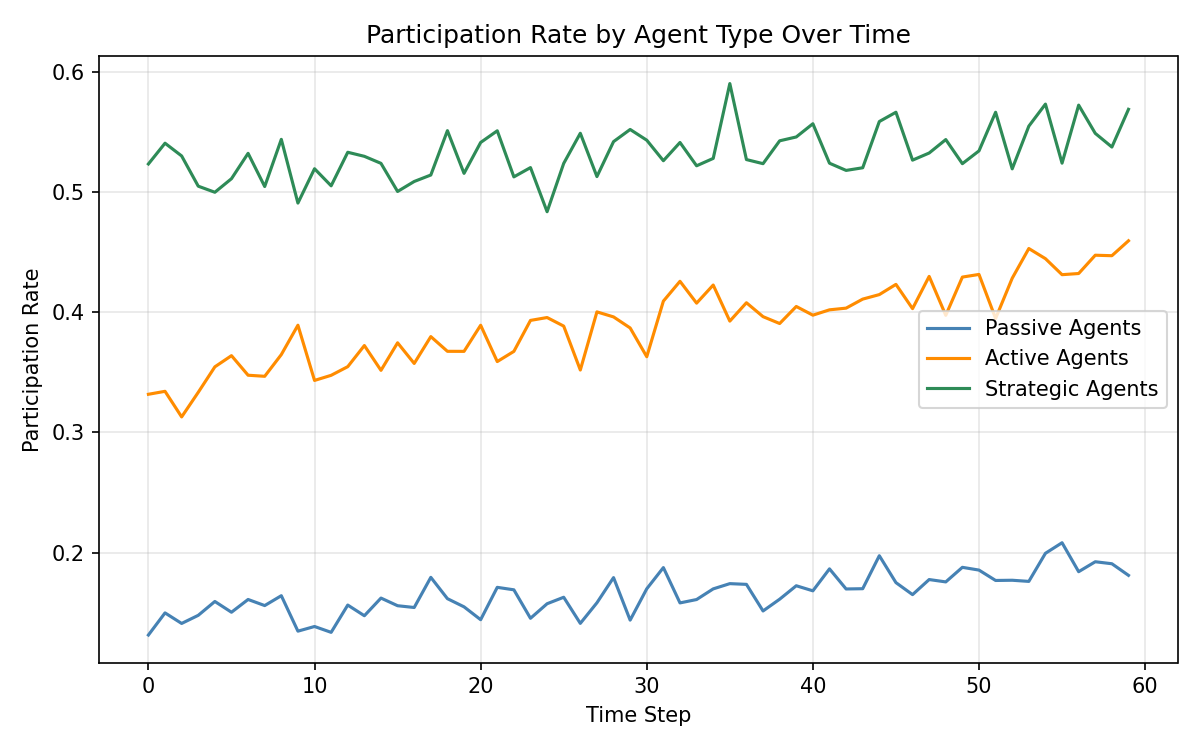}
\caption{Participation rates by agent type over 60 simulated decision rounds. Strategic agents consistently achieve the highest participation; passive agents' base participation rate increases gradually over time.}
\label{fig:agents}
\end{figure}

\section{Quorum Sensitivity}

\begin{figure}[htbp]
\centering
\begin{subfigure}[b]{0.48\textwidth}
  \includegraphics[width=\textwidth]{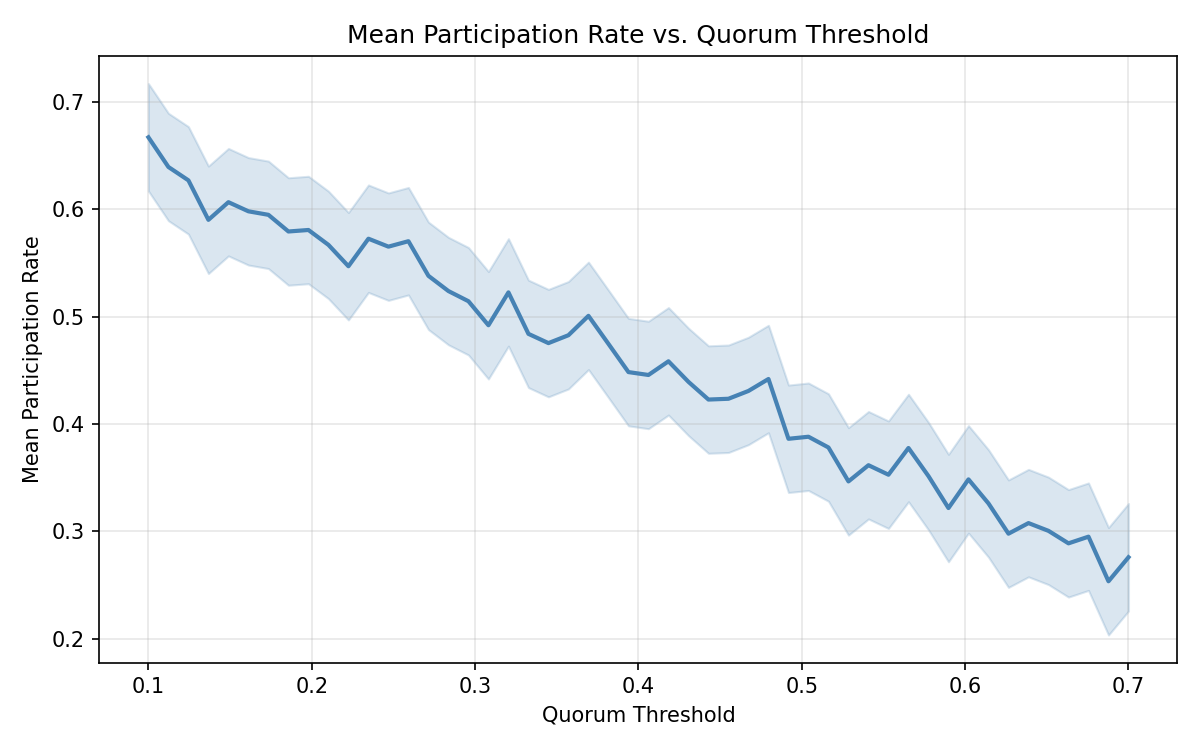}
  \caption{Mean participation vs.\ quorum threshold.}
\end{subfigure}
\hfill
\begin{subfigure}[b]{0.48\textwidth}
  \includegraphics[width=\textwidth]{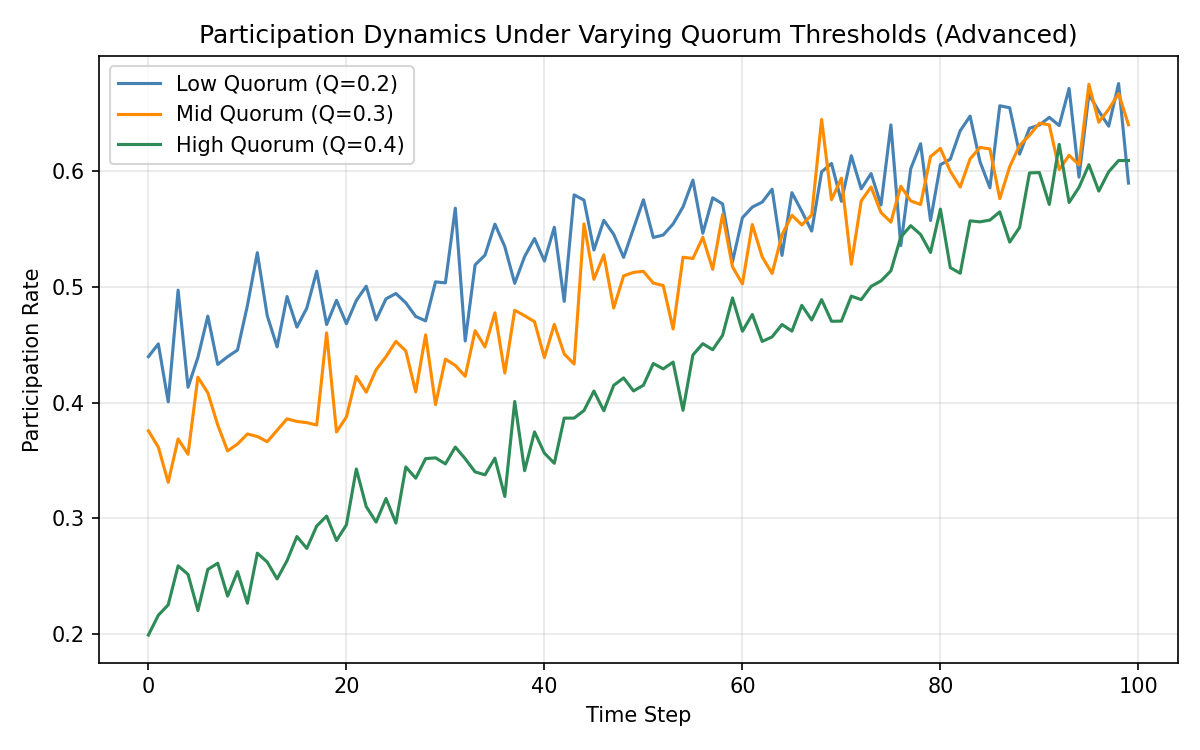}
  \caption{Participation dynamics under varying quorum thresholds.}
\end{subfigure}
\caption{Quorum sensitivity analysis. Higher thresholds increase mean participation but reduce decision throughput, illustrating the trade-off the dynamic quorum formula mediates.}
\label{fig:quorum_sensitivity}
\end{figure}

As quorum thresholds increase, mean participation rises (Figure~\ref{fig:quorum_sensitivity}a), but decision throughput falls. The dynamic quorum model (Equation~\ref{eq:dynamic_quorum}) calibrates this trade-off per decision, reserving high quorum requirements for high-impact proposals.

\section{System Stability and Veto Dynamics}

\begin{figure}[htbp]
\centering
\begin{subfigure}[b]{0.48\textwidth}
  \includegraphics[width=\textwidth]{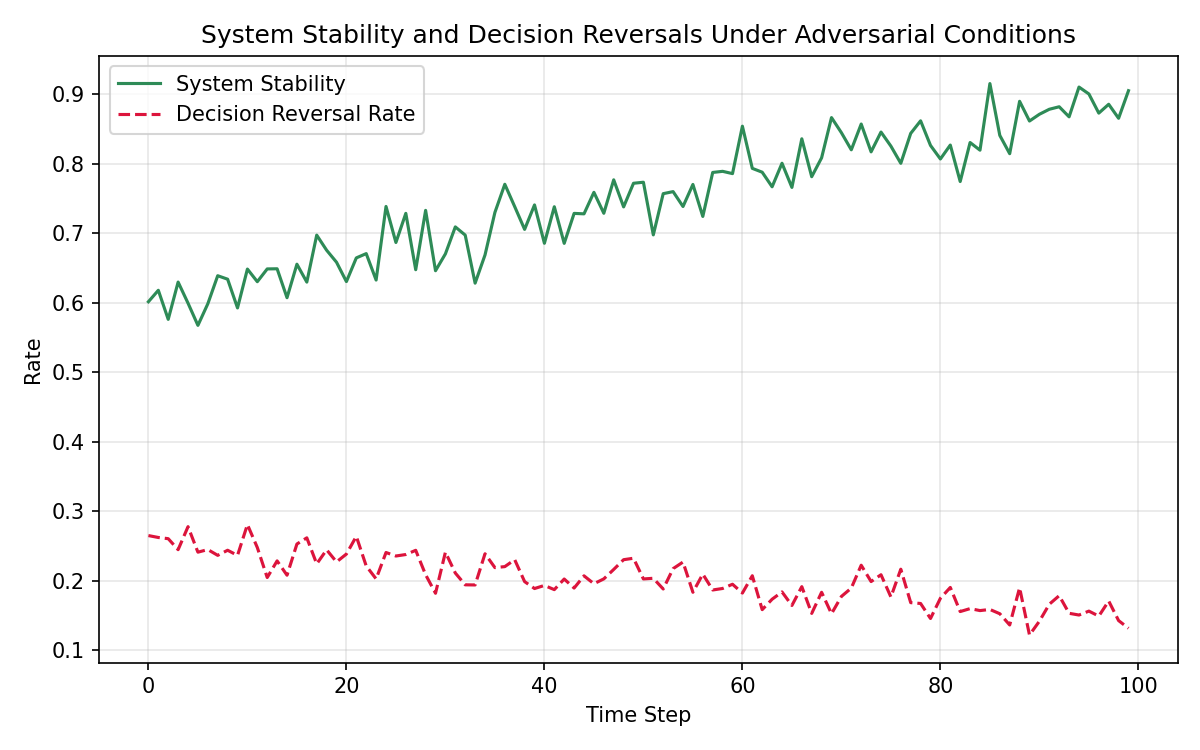}
  \caption{System stability and reversal rate.}
\end{subfigure}
\hfill
\begin{subfigure}[b]{0.48\textwidth}
  \includegraphics[width=\textwidth]{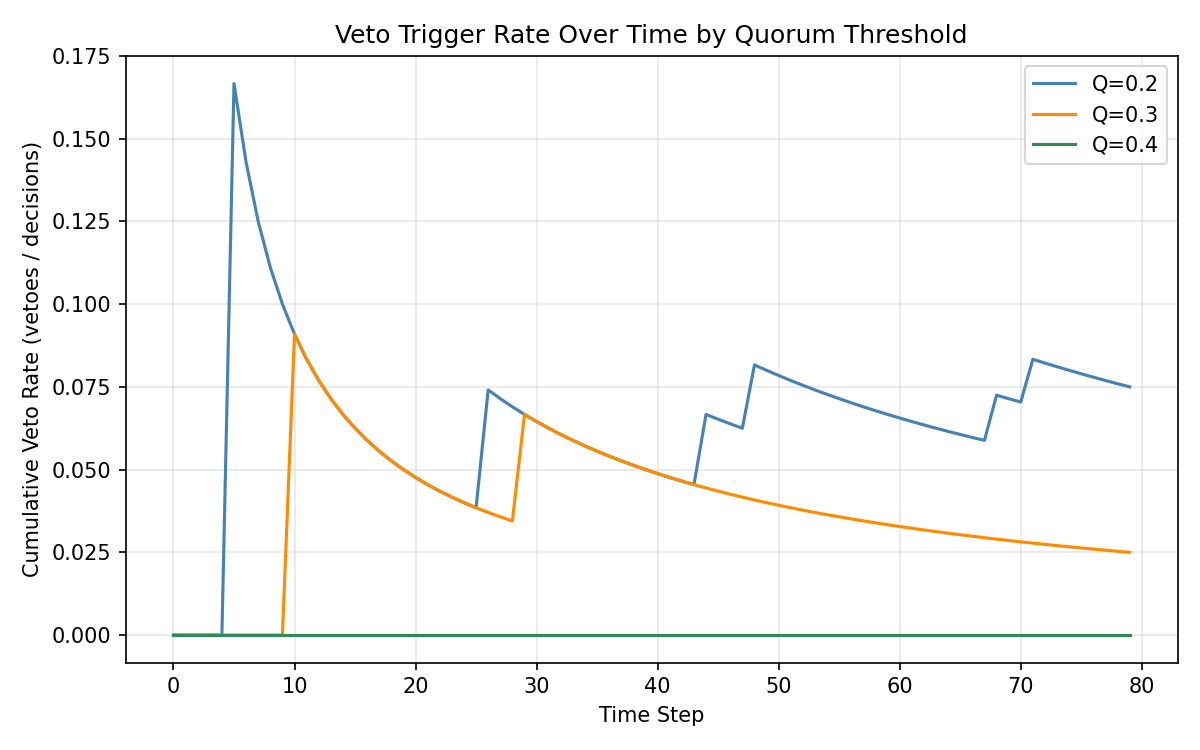}
  \caption{Cumulative veto trigger rate by quorum threshold.}
\end{subfigure}
\caption{System stability under adversarial conditions. Stability increases over time even as reversal rates decrease, confirming the veto mechanism's role as a safety valve rather than a destabiliser.}
\label{fig:stability}
\end{figure}

Figure~\ref{fig:stability}a shows that system stability increases over time even as decision reversal rates decrease, indicating that the veto mechanism functions as a self-correcting safety valve rather than a destabilising feature. Figure~\ref{fig:stability}b demonstrates that higher quorum thresholds suppress veto rates by increasing the threshold for both passage and reversal.

\section{Approval Rate}

\begin{figure}[htbp]
\centering
\includegraphics[width=0.8\textwidth]{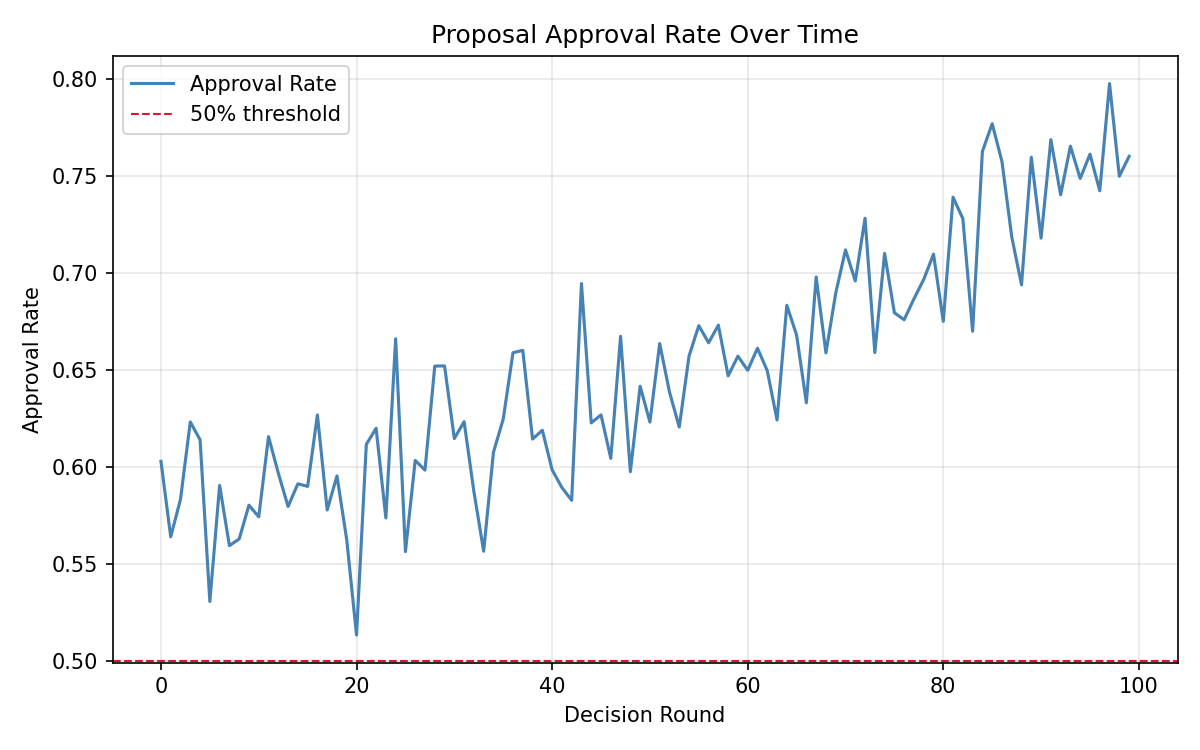}
\caption{Proposal approval rate over 100 decision rounds. Rates converge above the 50\% threshold, reflecting the emergence of stable participatory norms across agent archetypes.}
\label{fig:approval}
\end{figure}

Approval rates stabilise above the decision threshold over time, reflecting the emergence of participatory norms in the agent population.

% ══════════════════════════════════════════════════════════════════
\chapter{Game-Theoretic Analysis}
\label{sec:gametheory}

A critical question for any governance system is whether it is manipulation-resistant: can a coordinated minority extract disproportionate benefits by strategic participation?

\section{Manipulation Model}

Let $f \in [0, 1]$ denote the fraction of the electorate that a coordinated faction controls. The expected utility gain from manipulation is:

\begin{equation}
  \label{eq:gain}
  G(f, Q) = g_{\max} \left(1 - e^{-k \cdot \max(f - Q, 0)}\right)
\end{equation}

where $g_{\max}$ is the maximum utility gain, $k$ controls the steepness of the gain function, and $\max(f - Q, 0)$ captures the margin by which the faction exceeds the quorum threshold.

The mobilisation cost of controlling fraction $f$ of the electorate is:

\begin{equation}
  \label{eq:cost}
  C(f) = c_1 f^2 + c_2 f
\end{equation}

reflecting the quadratic cost of mobilising a large fraction of the population.

In the simulation illustrated in Figure~\ref{fig:gametheory}, the parameters take values $g_{\max} = 0.80$, $k = 6.0$, $c_1 = 0.55$, and $c_2 = 0.08$ (see also Table~\ref{tab:simparams}). These represent a scenario in which the maximum achievable manipulation gain is bounded (normalised to $[0,1]$), mobilisation costs are convex-dominant, and the linear term $c_2 f$ ensures strictly positive marginal cost at zero faction scale. The qualitative conclusions of Theorem~\ref{thm:manipulation}---the existence of $f^{*} > 0$ and its monotone dependence on $Q$---hold for any strictly positive $g_{\max}, k, c_1, c_2$.

\section{Equilibrium}

\begin{theorem}[Manipulation Deterrence]
\label{thm:manipulation}
For any quorum $Q > 0$, there exists a critical faction size $f^* > 0$ such that for all $f < f^*$, the manipulation cost exceeds the expected gain: $C(f) > G(f, Q)$. Coordination at scale $f < f^*$ is therefore not a Nash equilibrium.
\end{theorem}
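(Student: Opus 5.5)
The plan is to exploit the flat region of the gain function below the quorum. By Equation~\ref{eq:gain}, $G(f,Q) = g_{\max}(1 - e^{-k\max(f-Q,0)})$. Whenever $f \le Q$ we have $\max(f-Q,0)=0$, so $G(f,Q)=0$ identically. Meanwhile the cost $C(f) = c_1 f^2 + c_2 f$ from Equation~\ref{eq:cost} is strictly positive for every $f>0$, since $c_1, c_2 > 0$. Hence $C(f) > G(f,Q)$ on the whole interval $(0, Q]$. This already yields a candidate $f^* \ge Q > 0$.

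To make $f^*$ canonical and to capture the ``monotone dependence on $Q$'' mentioned before the statement, I will define
\[
  f^*(Q) = \sup\{\, f' \in (0,1] : C(f) > G(f,Q) \text{ for all } f \in (0,f') \,\},
\]
and show that $f^*(Q) > Q$ whenever $Q<1$. The argument is a local expansion at $f=Q$. Write $h(f) = C(f) - G(f,Q)$, which is continuous on $[0,1]$ with $h(Q) = c_1Q^2 + c_2 Q > 0$. By continuity, $h$ remains positive on a right-neighbourhood of $Q$, so the supremum strictly exceeds $Q$. There is one subtlety: the strict inequality is required only for $f>0$, because at $f=0$ both sides vanish. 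The statement quantifies over $f<f^*$, which I read as $f\in(0,f^*)$, and I will say so explicitly. Monotonicity in $Q$ follows because $G(f,Q)$ is nonincreasing in $Q$ for fixed $f$, so the set defining the supremum can only grow as $Q$ increases.

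The Nash clause is the only step that is not pure calculus, and it is the one I expect to be the main obstacle, since the paper never specifies the faction's strategy space or payoffs. I will interpret the faction as a single coordinated player that chooses between mobilising at scale $f$, with payoff $G(f,Q) - C(f)$, and not mobilising, with payoff $0$. For $f\in(0,f^*)$, the deviation to not mobilising yields $0 > G(f,Q) - C(f)$. The strategy profile ``coordinate at scale $f$'' therefore admits a profitable unilateral deviation and so is not a Nash equilibrium. I will state this modelling assumption openly. I will also note that for $f \le Q$ the gain is zero for structural reasons: a faction below the quorum cannot unilaterally meet the quorum $R \ge Q$ of Equation~\ref{eq:pass} on its own. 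This ties the abstract gain function back to the pass condition.
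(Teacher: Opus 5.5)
Your proposal is correct and follows essentially the same route as the paper's sketch: the gain vanishes identically on $[0,Q]$ while $C(f)>0$ for $f>0$, and continuity of $C-G$ then gives a positive critical size. Your first-crossing definition of $f^*$ is the better choice. The paper's $\sup\{f : C(f)\ge G(f,Q)\}$ selects the \emph{last} point where cost dominates, so it does not by itself guarantee $C(f)>G(f,Q)$ on all of $(0,f^*)$; for suitable positive parameters it can equal $1$ while $G>C$ on an interior interval. Your version also gives the sharper bound $f^*>Q$ and makes both the exclusion of $f=0$ and the Nash modelling assumption explicit.
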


\begin{proof}[Proof sketch]
Set $f^* = \sup\{f : C(f) \geq G(f, Q)\}$. At $f = 0$, both functions are zero. The cost function has $C'(0) = c_2 > 0$. The gain function satisfies $G'(0) = 0$, because $\max(f - Q, 0) = 0$ for all $f \leq Q$, so the gain is identically zero until the faction exceeds the quorum threshold---the gain curve is flat in $[0, Q]$. Since the cost curve rises strictly from the origin while the gain curve remains flat, $C(f) > G(f, Q)$ for all sufficiently small $f > 0$. By continuity of both functions, $f^*$ is well-defined and positive. For any $f < f^*$, joining the coordinating faction incurs net cost; rational agents therefore decline to join, and the faction cannot reach scale $f^*$. \qed
\end{proof}

\begin{figure}[htbp]
\centering
\includegraphics[width=0.85\textwidth]{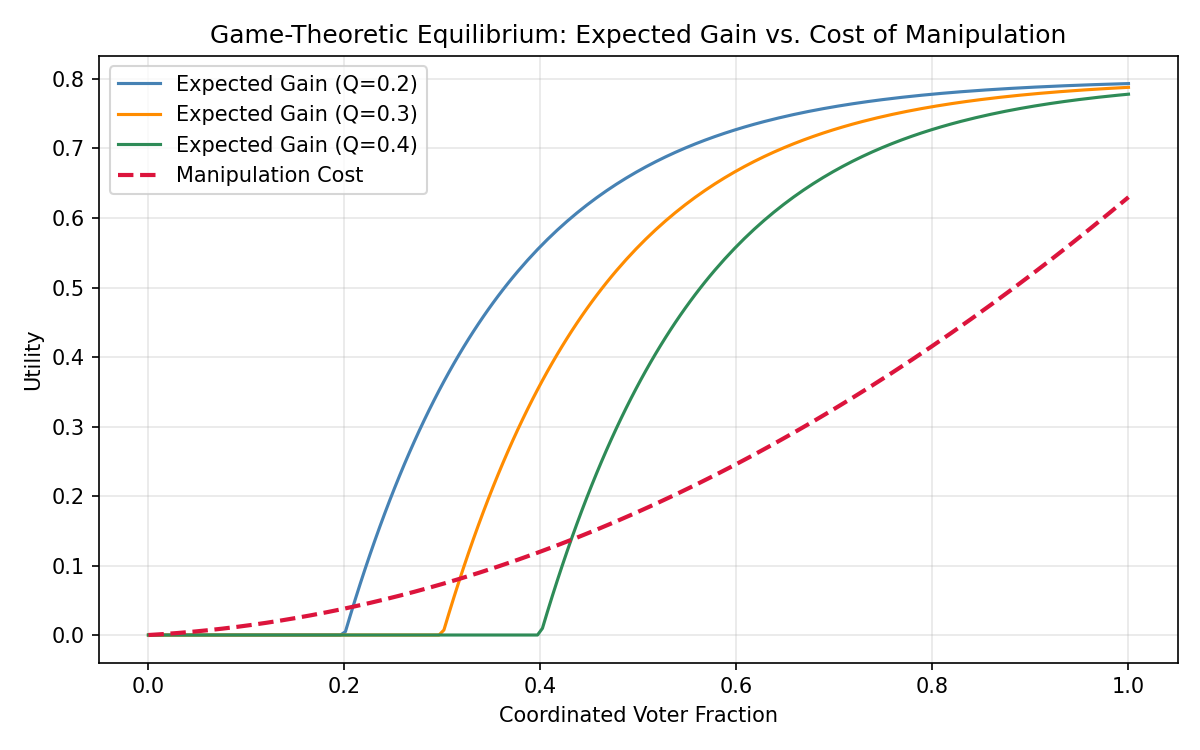}
\caption{Expected utility gain versus manipulation cost for varying quorum thresholds. Higher quorums shift the equilibrium point rightward, requiring a larger faction for manipulation to be profitable.}
\label{fig:gametheory}
\end{figure}

Figure~\ref{fig:gametheory} illustrates that higher quorum thresholds raise the manipulation cost crossover point, making coordinated manipulation harder. This is one of the key arguments for dynamic rather than fixed quorum thresholds.

\section{Repeated Games and Collusion Dynamics}

The single-round result in Theorem~\ref{thm:manipulation} establishes a safe quorum threshold but leaves open a more subtle threat: a faction too small to profitably manipulate any single decision might benefit from a long-run strategy of incremental capture---participating selectively across many rounds to shift governance outcomes gradually.

We model the governance system as an infinitely repeated game in which the same set of players participates in successive rounds with discount factor $\beta \in (0,1)$. Each faction $j$ chooses a participation strategy in each round $t$; the per-round payoff structure is as in the single-round model and the long-run payoff is the discounted sum $\sum_{t=0}^{\infty} \beta^t \pi_j^t$.

\begin{proposition}[Collusion Containment Under Dynamic Quorum]
\label{prop:collusion}
In the repeated governance game with dynamic quorum $Q(i) = Q_{\text{base}} + \alpha \cdot \sigma(i)$, sustaining a collusion equilibrium in which a faction of size $f < f^*$ achieves positive long-run gain requires discount factor $\beta \geq \beta^*(Q_{\text{base}}, \alpha)$, where $\beta^*$ is strictly increasing in both $Q_{\text{base}}$ and $\alpha$. Sufficiently patient factions can sustain collusion; sufficiently impatient factions cannot.
\end{proposition}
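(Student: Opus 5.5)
The proof goes through a grim-trigger construction in the spirit of the folk theorem \citep{fudenberg1986folk}. The single-round model does not pin the argument down on its own, so I first fix a stylised per-round structure.

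\textbf{Per-round structure.} In each round a faction of size $f<f^*$ can mobilise a cooperative surge up to some $\bar f > Q(i)$. It then earns the one-shot gain $G(\bar f,Q(i))$ at cost $C(\bar f)$.

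\textbf{Cooperation and deviation payoffs.} Let $\pi_c(Q)=G(\bar f,Q)-C(\bar f)$ denote the collusive per-round payoff. Members who shirk in the current round avoid their share of the cost while still enjoying the gain, so deviation yields $\pi_d>\pi_c$. Once a deviation occurs, collusion collapses to the stage-game payoff. Theorem~\ref{thm:manipulation} guarantees that below $f^*$ joining is unprofitable, so this fallback payoff is $\pi_0=0$.

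\textbf{Threshold.} The standard incentive constraint
\[
\frac{\pi_c}{1-\beta}\ \ge\ \pi_d+\frac{\beta\,\pi_0}{1-\beta}
\]
holds exactly when
\[
\beta\ \ge\ \beta^*:=\frac{\pi_d-\pi_c}{\pi_d-\pi_0}.
\]
This already gives the qualitative dichotomy: sufficiently patient factions can sustain collusion, and sufficiently impatient ones cannot.

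\textbf{Dependence on the quorum.} To express $\beta^*$ in the quorum parameters, I replace $Q$ by its expectation $\bar Q=Q_{\text{base}}+\alpha\,\mathbb{E}[\sigma]$, or work proposal by proposal. The bounds of Proposition~\ref{prop:quorum} keep $Q(i)\in[Q_{\text{base}},Q_{\text{base}}+\alpha]$, so the functions stay well defined. Two facts drive the argument:
\begin{itemize}
\item $G(\bar f,Q)$ is strictly decreasing in $Q$ whenever $\bar f>Q$, because $\partial_Q G=-g_{\max}k\,e^{-k(\bar f-Q)}<0$.
\item The cost term, and hence the shirking temptation $\pi_d-\pi_c$, is determined by mobilisation and not by $Q$.
\end{itemize}
It follows that $\pi_c$ is strictly decreasing in $Q$. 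Writing $\beta^*=1-\pi_c/(\pi_d-\pi_0)$ with $\pi_d-\pi_c$ held fixed, $\beta^*$ is strictly increasing in $Q$. Since $\partial \bar Q/\partial Q_{\text{base}}=1>0$ and $\partial\bar Q/\partial\alpha=\mathbb{E}[\sigma]>0$, the chain rule gives strict monotonicity in both $Q_{\text{base}}$ and $\alpha$. Strictness in $\alpha$ needs $\mathbb{E}[\sigma]>0$, and I would state that assumption explicitly.

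\textbf{Main obstacle.} The hard step is justifying this payoff structure rigorously. The single-round model has no intrinsic deviation payoff, and $\pi_d$ could in principle also depend on $Q$: if a shirker still receives $G$, then $\pi_d=G+\text{const}$ shifts with $Q$ as well. My first attempt is to define the deviation gain so that the defector's cost saving is independent of $Q$. Then the numerator $\pi_d-\pi_c$ is constant and the denominator $\pi_d-\pi_0$ falls with $Q$, which still yields a strictly increasing $\beta^*$. If that choice fails, I would state the result conditionally, assuming $\partial_Q(\pi_d-\pi_c)\ge 0$.

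A second subtlety is the degenerate regime $\bar f\le Q$. There $\pi_c\le 0$, no collusion is sustainable for any $\beta$, and $\beta^*$ should be set equal to $1$. Monotonicity then holds weakly across this boundary and strictly inside the region where collusion is feasible, and I would state that restriction in the proof.
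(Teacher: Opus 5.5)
Your incentive-constraint algebra is fine, but the model it runs on never uses the hypothesis $f<f^*$, and is in fact incompatible with it.

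For $\beta^*<1$ you need $\pi_c=G(\bar f,Q)-C(\bar f)>\pi_0=0$. Theorem~\ref{thm:manipulation} gives $C>G$ at every scale below $f^*$, so this forces the surge to satisfy $\bar f\ge f^*$. Your colluding faction is then profitably manipulating every single decision. The repeated game only enforces contributions against intra-faction free-riding, at a scale the single-round theorem already concedes is profitable. That is not the threat the proposition addresses: a faction too small to manipulate any single decision that captures outcomes incrementally.

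Conversely, take any mobilisation that genuinely stays below $f^*$. Your stationary grim-trigger game then has $\pi_c<0=\pi_0$, so the collusive stream $\pi_c/(1-\beta)$ is negative for every $\beta$ and the incentive constraint fails for every $\beta\in(0,1)$. The regime you set aside as degenerate is therefore not just $\bar f\le Q$ but all of $\bar f<f^*$, which is exactly the case the proposition concerns. Within your model, patient sub-$f^*$ factions can never sustain collusion and $\beta^*\equiv 1$ there. Neither the patient/impatient dichotomy nor strict monotonicity survives.

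The missing ingredient is a non-stationary payoff path, which is what the paper's sketch relies on. The paper treats collusion as an intertemporal investment: the faction absorbs negative net gains $G(f,Q(i))-C(f)<0$ now in exchange for capture payoffs once its position is established. $\beta^*$ is defined by indifference between this discounted stream and the one-round outside option. Monotonicity then follows, via the implicit function theorem, from $G$ being strictly decreasing in $Q$.

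The paper leaves the capture payoff unspecified, so you would have to supply one. For instance, take a net cost $c(Q)=C(f)-G(f,Q)>0$ for $T$ rounds, followed by a rent $R(Q)>0$ per round with $R$ strictly decreasing in $Q$. The discounted sum is then nonnegative iff $\beta^T\ge c/(c+R)$. Since $c$ is non-decreasing in $Q$, $\beta^*=\bigl(c(Q)/(c(Q)+R(Q))\bigr)^{1/T}\in(0,1)$ is strictly increasing in $Q$. This yields the dichotomy and, through $Q(i)=Q_{\text{base}}+\alpha\sigma(i)$, monotonicity in both parameters.

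Your two side observations survive this repair and sharpen the paper's sketch. Strictness in $\alpha$ does require $\sigma(i)>0$. The $Q$-dependence of the outside or deviation payoff must also be controlled, which the paper's implicit-function step silently assumes.
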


\begin{proof}[Proof sketch]
In each round the collusion faction's net gain is $G(f, Q(i)) - C(f)$. For $f < f^*$, single-round gain is negative, so the faction's strategy is to absorb per-round costs in exchange for long-run capture payoffs once its position is established. The critical discount factor $\beta^*$ solves the indifference condition between defecting (taking the single-round outside payoff) and sustaining the collusion agreement (discounted long-run stream). The monotonicity of $\beta^*$ in $Q_{\text{base}}$ and $\alpha$ follows from the fact that both parameters enter $Q(i)$ additively and $G(f,Q)$ is strictly decreasing in $Q$: a higher quorum reduces the per-round gain available to the colluding faction, shrinking the long-run surplus and requiring greater patience for the strategy to dominate. Formally, differentiating the indifference condition with respect to $Q_{\text{base}}$ and $\alpha$ and applying the implicit function theorem yields $\partial \beta^*/\partial Q_{\text{base}} > 0$ and $\partial \beta^*/\partial \alpha > 0$, given the strict monotonicity of $G$ in $Q$. Existence of a finite $\beta^*$ for any positive $Q_{\text{base}}$, $\alpha$, and $f < f^*$ follows from the folk theorem \citep{fudenberg1986folk}.
\end{proof}

This is the governance analogue of the folk theorem for infinitely repeated games \citep{fudenberg1986folk}: long-run collusion is sustainable for sufficiently patient actors but becomes increasingly costly as quorum thresholds rise. The empirical record of cooperation sustaining in repeated-interaction settings among self-interested agents \citep{axelrod1984evolution} underscores the practical relevance of this threat; Proposition~\ref{prop:collusion} characterises the parameter conditions under which the design resists it. The strategic implications for PPG parameter design are twofold. First, the impact sensitivity parameter $\alpha$ deters collusion as well as single-round manipulation: higher $\alpha$ raises $\beta^*$, requiring colluders to be more patient---and patient colluders are detectable, since their strategy must persist across many observable rounds. Second, the veto mechanism introduces a cross-round accountability structure absent from the single-round model: a faction that captures a high-impact decision faces increased veto mobilisation in subsequent rounds from citizens whose preferences were overridden, making sustained capture progressively costlier. The formal analysis of this veto-as-reputational-deterrent dynamic requires empirical discount rate estimates and is deferred to post-deployment work.

% ══════════════════════════════════════════════════════════════════
\chapter{System Properties}
\label{sec:properties}

\section{Legitimacy}

Legitimacy in PPG is \emph{measurable} rather than assumed. The legitimacy score $L(d)$ (Equation~\ref{eq:legitimacy}) provides a quantitative basis for comparing decisions and for tracking institutional health over time. Decisions consistently scoring below a threshold signal governance failure.

\section{Transparency}

All governance actions---proposal submission, deliberation contributions, votes, execution records---are logged immutably. This auditability satisfies Habermas's communicative rationality criterion: any citizen can inspect the reasoning behind any decision \citep{habermas1996between}. Transparency here is not a feature but a structural default, implementing the Transparency Default (Section~\ref{sec:inversion}) at the data layer.

\section{Accountability}

Accountability is structural rather than electoral. The veto mechanism provides real-time accountability, while the legitimacy score creates an ongoing public record of institutional performance. This corresponds to Rosanvallon's counter-democratic power of oversight \citep{rosanvallon2008counter}: accountability flows continuously from government to citizens rather than episodically at election intervals.

\section{Iterative Governance}

PPG is explicitly designed as an \emph{iterative system} rather than a finished solution. Governance operates in a continuously evolving social, technological, and institutional context; fixed governance rules are fragile and become brittle over time. Ostrom's polycentric governance theory directly anticipates this: effective commons institutions are not designed once and frozen but evolve through practice, observation, and collective choice \citep{ostrom1990governing}. \citet{hayek1945use} reaches a structurally identical conclusion from a different tradition: because relevant knowledge is inherently dispersed and tacit, no fixed institutional design can claim to have correctly specified all governance parameters in advance; the capacity for adaptive rule modification is not an optional refinement but an epistemic requirement.

This convergence between institutional theory and epistemology has a formal correlate in mechanism design. Hurwicz's foundational result establishes that a centralised planner cannot correctly elicit agents' private valuations without satisfying incentive compatibility constraints that require active participation \citep{hurwicz1973design}; Maskin's extension characterises the full class of social goals implementable under these constraints \citep{maskin2008mechanism}. Applied to governance, these results provide a formal basis for the epistemic argument:

\begin{proposition}[Epistemic Necessity of Participatory Input]
\label{prop:epistemic}
Let $\theta_i$ denote the private governance preferences held by citizen $i$ (local knowledge, valuation of outcomes). A centralised governance mechanism that does not elicit $\theta_i$ from the eligible population will systematically produce decisions based on incomplete preference information, making them liable to be Pareto-inferior to those achievable with full participation. A participatory mechanism that aggregates revealed preferences through structured deliberation and verifiable voting is epistemically superior to centralised administrative discretion: not merely normatively preferable, but informationally necessary for governance that reliably reflects the population's actual valuations.
\end{proposition}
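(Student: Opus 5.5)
Since Proposition~\ref{prop:epistemic} is informal as stated, the first step is to fix a minimal model in which each claim becomes precise. I would take a finite population $N=\{1,\dots,n\}$ with private types $\theta_i \in \Theta_i$ drawn from a common prior with full support, and a finite outcome set $X$. Utilities are $u_i(x,\theta_i)$, and the planner's target is the social choice function $f^{\mathrm{eff}}(\theta)=\arg\max_x \sum_i u_i(x,\theta_i)$, which is Pareto efficient. A \emph{centralised mechanism that does not elicit} $\theta_i$ is a decision rule $d$ measurable with respect to the planner's own information $\mathcal{F}_0$. I would take $\mathcal{F}_0$ to be trivial, or at least coarser than the $\sigma$-algebra generated by $\theta$, so that $d$ is constant on the atoms of $\mathcal{F}_0$. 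A \emph{participatory mechanism} is a direct or indirect mechanism $(M,g)$ whose equilibrium outcome depends on the reported types.

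The first claim I would prove is the incompleteness and Pareto-inferiority statement. The key step is a genericity argument. Suppose $f^{\mathrm{eff}}$ is non-constant on some atom $A$ of $\mathcal{F}_0$ that has positive prior probability, which is exactly the content of ``local knowledge matters.'' Then any $\mathcal{F}_0$-measurable $d$ differs from $f^{\mathrm{eff}}$ on a positive-measure subset of $A$. On that subset, some profile $\theta$ has an outcome $x'=f^{\mathrm{eff}}(\theta)$ that strictly improves total surplus over $d(\theta)$. To turn a surplus improvement into a Pareto improvement, I would assume quasi-linear utilities with transfers, which makes the improvement Pareto after compensation. This gives ``liable to be Pareto-inferior'' in the precise sense that $d$ is ex post inefficient with positive probability.

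Second, for the superiority claim, I would invoke the mechanism design results cited just before the statement. By the revelation principle \citep{hurwicz1973design}, together with a VCG-style construction in the quasi-linear setting, there exists a dominant-strategy incentive-compatible direct mechanism implementing $f^{\mathrm{eff}}$. This mechanism weakly dominates every $\mathcal{F}_0$-measurable $d$ in expected surplus, and it dominates strictly under the genericity assumption above. This is the rigorous core of ``informationally necessary'': under the genericity assumption, no rule that ignores the reports $\theta_i$ can implement $f^{\mathrm{eff}}$. For non-quasi-linear domains I would fall back on Maskin monotonicity \citep{maskin2008mechanism}. That route only yields implementability of a restricted class of goals, so the conclusion there must be weakened accordingly.

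I expect the main obstacle to be the link to the \emph{actual} PPG mechanism, namely majority voting subject to the quorum $Q(i)$ of Proposition~\ref{prop:quorum}. This mechanism is neither VCG nor generally incentive compatible. By Arrow and Gibbard--Satterthwaite it cannot implement $f^{\mathrm{eff}}$ on an unrestricted domain, so the literal claim that it is epistemically superior is false in general. My plan is therefore to prove a restricted version. I would work with binary proposals, the case the pipeline actually produces, and use the fact that sincere majority voting is dominant-strategy incentive compatible for two alternatives. I would then show that, conditional on $R\ge Q(i)$, the vote is measurable with respect to the elicited types and agrees with $f^{\mathrm{eff}}$ whenever utilities are symmetric in intensity. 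Stating this restriction explicitly is where I expect most of the work, and most of the necessary hedging of the proposition, to lie.
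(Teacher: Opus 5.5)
Your first two steps are sound, but they follow a different route from the paper. The paper's sketch argues in four moves:
\begin{itemize}
  \item via Hurwicz, that a non-incentive-compatible centralised rule induces misreporting;
  \item via Maskin, that incentive-compatibly implementable goals ``strictly subsume'' non-participatory ones;
  \item that the only exception is ``when all $\theta_i$ happen to be uniform'';
  \item that PPG is covered because civic incentives make expression truthful, which is assumed rather than shown.
\end{itemize}
Your measurability formulation identifies the actual defect of a non-eliciting rule, which is missing information, not misreporting of reports it never receives. VCG gives the existence result directly, so the revelation principle is not needed. Your genericity condition is the correct replacement for the paper's exception clause, which misidentifies when the planner loses nothing: a common but unknown $\theta$ still defeats the planner, while heterogeneous $\theta$ can leave $f^{\mathrm{eff}}$ constant.

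Two small repairs are needed. Phrase genericity as ``no single outcome is almost surely surplus-maximising on $A$'', so that ties do not break the strictness step. Also note that VCG gives surplus dominance over $d$, not a Pareto improvement, because its transfers are not budget-balanced. You are right that the unrestricted superiority claim is false: with asymmetric intensities, a constant rule informed by the prior can beat majority rule in expected surplus. The paper reaches the unrestricted statement only by assumption; your route trades generality for an actual argument.

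The genuine gap is in your third step, which is exactly where the paper substitutes an assumption for an argument. Both claims you plan to use fail for PPG's actual rule, which passes a proposal iff $R\ge Q(i)$ and $V_{\text{for}}/V>0.5$.

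First, once abstention is available, sincere voting is not dominant for opponents of a proposal. Suppose the other voters leave turnout one vote short of $Q(i)$, with a yes-share that stays above $0.5$ even after one more no vote. Then an opponent who votes no causes passage, while abstaining causes rejection. Suppose instead the quorum is already met and one more no vote would pull the yes-share to at most $0.5$. Then voting no is strictly better. So opponents have no dominant strategy; this is the no-show (quorum) paradox of turnout quorums. Conditioning on $R\ge Q(i)$ does not remove it, because $R$ is itself an equilibrium outcome.

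Second, suppose participation is exogenous and votes are sincere. The outcome is then measurable only with respect to the participants' types, whereas under symmetric intensities $f^{\mathrm{eff}}$ is the majority of the whole population. With $R$ as low as $Q_{\text{base}}$, a self-selected bloc can carry a decision the population majority opposes. This is the faction of Chapter~\ref{sec:gametheory}, and the very limitation the paper flags in Chapter~\ref{sec:limitations}.

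The restricted claim therefore needs an explicit participation hypothesis, in one of two forms:
\begin{itemize}
  \item Full participation, for example under an enforced Civic Participation Mandate (Section~\ref{sec:mandate}). Then binary majority equals $f^{\mathrm{eff}}$ exactly.
  \item Participation independent of $\theta$. This gives agreement only with high probability, and only when the population margin is not too thin.
\end{itemize}
For endogenous turnout you must either assume strategic abstention away or replace the turnout quorum by an approval quorum $V_{\text{for}}/P\ge Q(i)$. Under an approval quorum, sincere voting is again weakly dominant.
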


\begin{proof}[Proof sketch]
Hurwicz establishes that any social choice function not satisfying incentive compatibility induces rational agents to misreport private information \citep{hurwicz1973design}. In the governance setting, $\theta_i$ encodes citizen $i$'s local knowledge, preference intensities, and the expected consequences of a proposal---information that is private, dispersed, and irreducible to any publicly observable signal. A centralised mechanism that fixes decisions without eliciting $\theta_i$ implements a social choice function with no revelation mechanism. By Hurwicz's result, rational agents misreport when not given an incentive-compatible elicitation mechanism, so the decisions taken reflect distorted signals rather than true preferences $\theta_i$. By Maskin's characterisation \citep{maskin2008mechanism}, the set of social goals implementable under incentive compatibility strictly subsumes those achievable by a non-participatory mechanism. A non-participatory mechanism therefore makes decisions on systematically corrupted preference information; any outcome that would differ from the truthfully-revealed optimum represents an avoidable Pareto-inferior choice, except when all $\theta_i$ happen to be uniform across the population. The PPG pipeline provides the structural conditions for preference revelation through structured argument contribution (deliberation layer) and verifiable vote direction (voting layer), approximating the incentive-compatible elicitation conditions under the empirically supported assumption that civic participation incentives are sufficient to motivate truthful expression. \qed
\end{proof}

Proposition~\ref{prop:epistemic} has a direct implication for the dynamic quorum design: if participation is not merely a legitimacy requirement but an epistemic input, then decisions reached on low participation are not only less legitimate---they are less likely to be correct. The dynamic quorum formula (Equation~\ref{eq:dynamic_quorum}) is therefore an epistemic filter as well as a normative instrument, calibrating the required participation to the stakes of the decision.

This self-referential property is detailed in Section~\ref{sec:conversion}: the framework governs its own parameters through the same pipeline it describes, and the staged deployment model (Chapter~\ref{sec:pathway}) ensures that each tier generates empirical feedback that refines the protocol before scaling.

\section{Resilience}

Simulation evidence (Chapter~\ref{sec:simulation}) demonstrates stable approval convergence under adversarial strategic participation across all tested quorum settings. The veto mechanism functions as a self-correcting safety valve rather than a destabilising reversal mechanism---Figure~\ref{fig:stability} shows that system stability increases over time even as reversal rates decrease. The analytical bound on adversarial faction size is established by Theorem~\ref{thm:manipulation}: no faction below the critical size $f^*$ can expect positive net gains from coordinated manipulation under the model's parameters.

\section{Scalability}

Decision pipelines are parallelisable: independent proposals proceed concurrently through all stages, so the throughput of the governance system scales with the number of concurrent proposals rather than being bottlenecked by sequential processing. The dynamic quorum formula partitions decisions by required mobilisation without central coordination, allowing low-stakes decisions to clear quickly while high-impact decisions draw broader participation. The identity verification architecture (Chapter~\ref{sec:identity}) is explicitly designed for national-scale populations, using zero-knowledge proofs and social-graph attestation to verify eligibility without storing personal data centrally.

% ══════════════════════════════════════════════════════════════════
\chapter{Towards a Decentralised Implementation}
\label{sec:decentralised}

The PPG framework developed in previous sections is implementation-agnostic: it can be deployed on centralised civic infrastructure, federated institutional networks, or fully decentralised programmable systems. We argue here that a decentralised implementation is not merely technically feasible but is, in important respects, the \emph{most suitable} substrate for PPG---and that this paper constitutes the theoretical and formal groundwork toward that goal. The five-layer architecture specified here decomposes the Governance Engine of Chapter~\ref{sec:architecture} into Execution and Coordination layers and adds a Settlement Layer below; the Identity and Interface layers map directly to their Chapter~\ref{sec:architecture} counterparts.

\section{The Case for Decentralisation}

Centralised governance infrastructure introduces three structural vulnerabilities that directly undermine PPG's design objectives:

\begin{enumerate}
  \item \textbf{State capture.} A centralised governance platform controlled by any single operator creates a single point of capture. If the operator modifies quorum rules, suppresses proposals, or alters vote tallies, the formal guarantees of PPG are void. Immutable, publicly auditable execution eliminates this threat surface.
  \item \textbf{Censorship and suppression.} Centralised systems can be compelled---legally or politically---to remove proposals, silence participants, or manipulate deliberation feeds. Censorship-resistant infrastructure ensures the governance process cannot be shut down by any party, including the state.
  \item \textbf{Trust bootstrapping.} Citizens must trust the operator of a centralised system. This recurring obstacle is particularly acute in low-trust institutional environments. A decentralised system replaces institutional trust with cryptographic verification: \emph{don't trust, verify.}
\end{enumerate}

These concerns motivate the framing of PPG as a \emph{governance protocol}---a specification that can be executed trustlessly on decentralised infrastructure, just as TCP/IP specifies communication without depending on any single network operator \citep{defilippi2018blockchain}.

\section{Lessons from Existing DAO Governance}

Decentralised Autonomous Organisations (DAOs) are the first large-scale experiments in programmable, blockchain-based governance \citep{hassan2021decentralized}. They demonstrate that programmable governance is technically feasible and economically significant: Ethereum-based DAOs collectively manage billions of dollars in on-chain assets. However, their governance models exhibit well-documented structural failures that PPG is designed to correct:

\begin{table}[htbp]
\centering
\caption{Structural analysis of representative DAO governance models.}
\begin{tabular}{@{} l p{2.6cm} p{2.7cm} p{3cm} @{}}
\toprule
\textbf{Protocol} & \textbf{Governance Model} & \textbf{Strengths} & \textbf{Failure Modes} \\
\midrule
Compound / Uniswap & Token-weighted on-chain voting & Transparent, auditable & Plutocracy; participation $<$10\% \\
Nouns DAO           & Daily auction + token vote & High engagement, cultural cohesion & Fragmentation under adversarial forks \\
MakerDAO            & Delegated token vote & Expert delegation, stability & Capture by large token holders \\
Snapshot            & Off-chain signalling, token weight & Low cost, high participation & Non-binding; no execution guarantees \\
Aragon              & Modular governance plugin stack & Composable, extensible & Plugin complexity; dependency risk \\
\bottomrule
\end{tabular}
\label{tab:dao}
\end{table}

The common thread across all models is that one-token-one-vote conflates economic stake with governance standing---a form of plutocracy structurally at odds with democratic principles \citep{buterin2022desoc, chohan2017dao}. PPG's one-person-one-vote model, grounded in verified identity rather than asset ownership, directly addresses this deficit. The Snapshot pattern (off-chain deliberation, on-chain commitment) provides a useful architectural precedent for the Coordination Layer described below, but requires the binding execution guarantee that only on-chain quorum enforcement provides.

\section{DeSoc and Soulbound Governance Primitives}

\citet{buterin2022desoc} propose \emph{Decentralised Society} (DeSoc) as a reorientation of Web3 away from transferable financial assets toward non-transferable social relationships encoded as \emph{Soulbound Tokens} (SBTs). SBTs are the natural identity primitive for PPG in a decentralised context:

\begin{itemize}
  \item \textbf{Non-transferability.} Governance credentials cannot be bought, sold, or delegated, resisting the plutocratic dynamics of token governance. A person cannot sell their participation right any more than they can sell their citizenship.
  \item \textbf{Social-graph provenance.} SBTs issued by communities, institutions, and peers encode the social graph directly on-chain, enabling the graph-based Sybil resistance described in Section~\ref{sec:identity} without requiring a centralised identity provider.
  \item \textbf{Composable eligibility.} Different governance contexts---local ward, municipality, sector-specific regulatory body---can issue context-specific SBTs, enabling the multi-jurisdictional eligibility scoping that PPG's proposal layer requires.
\end{itemize}

DeSoc is not a competing vision to PPG but the social infrastructure layer on which PPG runs. Where PPG provides the governance protocol, DeSoc provides the identity fabric.

\section{Decentralised System Architecture}

A decentralised PPG implementation decomposes into five architectural layers:

\begin{figure}[htbp]
\centering
\begin{tikzpicture}[node distance=1.7cm,
  layer/.style={draw, rectangle, minimum width=10.5cm, minimum height=1cm,
                fill=blue!6, font=\small\bfseries, text centered}]
  \node (ui)    [layer] {Interface Layer --- progressive web app, local-first, accessibility-first};
  \node (coord) [layer, below of=ui]
                {Coordination Layer --- off-chain deliberation, IPFS storage, Arweave audit log};
  \node (exec)  [layer, below of=coord]
                {Execution Layer --- L2 smart contracts, quorum enforcement, veto logic};
  \node (id)    [layer, below of=exec]
                {Identity Layer --- SBTs, ZK proofs of personhood, social-graph verification};
  \node (net)   [layer, below of=id]
                {Settlement Layer --- Ethereum mainnet, cryptographic finality, public auditability};
  \draw[<->, thick, blue!50] (ui) -- (coord);
  \draw[<->, thick, blue!50] (coord) -- (exec);
  \draw[<->, thick, blue!50] (exec) -- (id);
  \draw[<->, thick, blue!50] (id) -- (net);
\end{tikzpicture}
\caption{Five-layer decentralised PPG architecture, from citizen interface to cryptographic settlement.}
\label{fig:dec_architecture}
\end{figure}

\paragraph{Settlement Layer.} The base layer---Ethereum mainnet or an equivalent programmable settlement network \citep{buterin2014ethereum}---provides cryptographic finality, immutability, and global auditability. Governance \emph{outcomes} (passed proposals, veto records, legitimacy scores) are committed here. The settlement layer is not used for high-frequency operations due to cost and latency constraints.

\paragraph{Identity Layer.} Citizen identities are represented as SBTs and verified through zero-knowledge proofs of eligibility \citep{boneh2020graduate}. No raw personal data is published on-chain; ZK proofs attest to the fact of eligibility without revealing underlying credentials. The social-graph Sybil-resistance algorithm runs off-chain across a distributed validator set, with the resulting eligibility commitment anchored on-chain each governance cycle.

\paragraph{Execution Layer.} The core PPG state machine (Section~\ref{sec:formal}) is implemented as smart contracts on a Layer 2 network---an Optimistic or ZK rollup---for high throughput and low per-transaction costs. The quorum contract enforces Equations~\ref{eq:pass}--\ref{eq:veto} atomically: no state transition occurs without satisfying the on-chain threshold condition. This eliminates the need to trust any intermediary for vote counting.

\paragraph{Coordination Layer.} Deliberation---the most bandwidth-intensive governance activity---is handled off-chain. Proposals, structured arguments, expert assessments, and revision histories are stored on content-addressed networks (IPFS for availability, Arweave for permanent archival). Cryptographic content hashes are anchored on-chain, ensuring the deliberation record cannot be retroactively altered. This extends the Snapshot off-chain signalling pattern with binding execution and tamper-evident archiving.

\paragraph{Interface Layer.} Citizens interact through progressive web applications designed for low-bandwidth environments and broad device compatibility. A local-first architecture ensures the interface remains functional during network partitions, syncing state when connectivity is restored. Multi-language support, screen reader compatibility, and low-literacy interaction modes are built into the specification as first-class requirements.

\section{On-Chain vs.\ Off-Chain Governance Partition}

A central design tension in decentralised governance is the partition between on-chain (fully verifiable, higher cost, slower) and off-chain (fast, low cost, requires trust) operations. We formalise this tension as a partition principle:

\begin{mdframed}
\textbf{Decentralisation Partition Principle:} Place on-chain only what must be \emph{auditable by anyone, enforceable without trust, and permanently recorded}. Place off-chain everything that can be \emph{verified by cryptographic reference} to an on-chain commitment.
\end{mdframed}

Under this principle: vote counts, quorum thresholds, state transitions, and veto records are on-chain; deliberation content, participation history, and identity social graphs are off-chain with on-chain hash commitments. This partition preserves the formal guarantees of the PPG state machine while remaining practically deployable at national scale.

\section{Composability and Governance Modules}

A decentralised PPG stack is \emph{composable}: individual governance modules---quorum contracts, veto registries, identity validators, deliberation anchors---can be assembled into different configurations for different governance contexts without redeploying the full stack. This follows the protocol-layer abstraction articulated by \citet{defilippi2018blockchain}: governance rules are not mere descriptions of desired behaviour but self-executing code---auditable, forkable, and interoperable across jurisdictions.

Composability enables multi-level governance. A neighbourhood assembly, a municipal council, and a sectoral regulatory body can each run PPG-compatible governance contracts while sharing identity infrastructure and audit tooling, interoperating via standardised governance message formats. Governance decisions at lower levels can be escalated to higher tiers through composable proposal-bridging contracts, preserving the legitimacy chain at each level.

\section{Research Roadmap}

The five specific work items this programme requires---smart contract formalisation, ZK identity circuit audit, Sybil resistance testing at scale, Layer~2 performance benchmarking, and pilot deployments---are incorporated into the Tier~1 and Tier~2 research agendas in Chapter~\ref{sec:pathway}, where each item is mapped to the deployment context and empirical prerequisites that make it tractable. The present paper positions itself as \emph{Phase~0} of that programme: providing the formal and empirical foundations on which trustless, censorship-resistant, and democratically legitimate governance infrastructure can be built.

% ══════════════════════════════════════════════════════════════════
\chapter{Limitations}
\label{sec:limitations}

This chapter identifies the significant limitations of the PPG framework, evaluated honestly against the claims made for it. The goal is to provide an accurate picture of what the framework does and does not establish.

\section{Identity: A Partially Solved Problem}

The identity problem---ensuring each eligible person is counted exactly once while preserving privacy---is the most technically challenging component of digital governance. The proposed hybrid model (Soulbound Tokens + social-graph verification + zero-knowledge proofs) is a current best-practice synthesis, but no existing approach fully satisfies all four desired properties simultaneously. Social-graph Sybil resistance degrades as adversaries create organic-looking fake social graphs \citep{douceur2002sybil}; ZK proof circuits are complex and require independent security audits; SBT non-transferability relies on secure key management that many citizens cannot reliably provide. The identity model should be treated as a research specification requiring empirical testing and independent security analysis, not as a production-ready solution.

\section{Simulation Fidelity}

The simulations in Chapter~\ref{sec:simulation} use three stylised behavioural archetypes (passive, active, strategic). Real populations contain a much wider range of participation motivations, shaped by issue salience, social networks, trust levels, and information environments. The simulation model is stochastic rather than fully agent-based: agents do not interact with one another, hold no memory across rounds, and do not adapt strategies in response to outcomes. The simulation results establish the \emph{existence} of parameter regimes in which PPG's dynamic properties hold (stable approval rates, functional veto mechanism, manipulation deterrence), but they do not establish that these regimes match any specific real-world deployment context. Simulation parameters would need to be calibrated against empirical participation data from pilot deployments before the quantitative findings can be applied. The specific base participation rates used in the simulations (passive 14\%, active 34\%, strategic 52\%) are derived from cross-national participation research but are not grounded in any PPG-specific deployment context; Tier~1 pilot data will enable replacement of these estimates with empirically calibrated participation rates and a direct test of whether the dynamic quorum formula maintains predicted throughput behaviour.

The outer legitimacy score weights $w_1, w_2, w_3$ in Equation~\ref{eq:legitimacy} and the inner deliberation quality weights $a_1$--$a_4$ in Equation~\ref{eq:delib} are currently unspecified; their values require empirical calibration against participant outcome data from deployed governance processes. The four-component structure of $\Delta_{\text{delib}}$ (Equation~\ref{eq:delib}) provides an operational measurement specification, but the relative weighting of argument diversity, expert coverage, deliberation depth, and inclusive participation cannot be determined without data on which combinations of these inputs produce decisions that participants subsequently regard as well-considered. The first empirical priority of any Tier~1 deployment should be generating participant satisfaction data sufficient to regress against observed participation and approval rates, providing initial estimates of all seven weights.

\section{Game-Theoretic Model Assumptions}

The manipulation and collusion analysis in Chapter~\ref{sec:gametheory} rests on two functional form choices: a quadratic mobilisation cost $C(f) = c_1 f^2 + c_2 f$ and an exponential gain function $G(f,Q) = g_{\max}(1 - e^{-k\max(f-Q,0)})$. These are standard choices from the political economy and lobbying-cost literature \citep{austen1989interest}, and the quadratic convexity of $C$ is the weakest assumption required for Theorem~\ref{thm:manipulation}: any strictly convex cost function with $C'(0) > 0$ preserves the proof. Similarly, any gain function that is identically zero on $[0, Q]$ and bounded above by $g_{\max}$ yields an analogous equilibrium.

However, two modelling simplifications are worth noting. First, the gain function $G(f,Q)$ is zero for all $f \leq Q$, which models a faction that must \emph{alone} exceed the quorum of the entire eligible electorate. In practice, a faction need only be a majority of \emph{participating} voters, not the whole electorate, to capture a decision. A more realistic gain model would replace $Q$ with the fraction of participating voters that the faction must exceed, making the manipulation threshold sensitive to overall turnout. The current formulation is conservative: it \emph{overestimates} the quorum required for manipulation deterrence, which means the theorem's guarantee is weaker in practice than it appears. Correcting this requires a joint model of faction size and overall turnout, deferred to future empirical work. Second, the functional form parameters ($g_{\max}$, $k$, $c_1$, $c_2$) are set to representative values in the simulation but are not derived from empirical mobilisation data. Sensitivity analysis over the parameter space is an empirical priority once Tier~1 pilot data becomes available.

\section{Deliberation Quality Is Not Guaranteed}

The framework assumes that structured deliberation---with balanced briefing materials, argument mapping, and expert review---produces better-informed decisions than unstructured voting. This assumption is supported by the deliberative polling literature \citep{fishkin2018stanford, dutwin2003character}, but the evidence comes from controlled research contexts (deliberative polls with random samples and professional facilitation). At scale, with self-selected rather than randomly sampled participants, and without professional facilitation, deliberation quality is harder to guarantee. The model provides structural scaffolding for deliberation but cannot substitute for the social and pedagogical conditions that make deliberation function.

\section{Constitutional and Legal Integration}

PPG is a governance framework specification, not a constitutional design. Deploying it within existing legal systems requires navigating constitutional constraints, electoral law, administrative procedure requirements, and political incumbency. In many jurisdictions, binding citizen-initiated decisions may require constitutional amendment. The framework is most naturally applicable in contexts where it can be adopted voluntarily---cooperative governance, community organisations, digital-native jurisdictions---before attempting to interface with established state institutions.

The legal integration challenge is not uniform across deployment tiers, and the three-tier pathway (Chapter~\ref{sec:pathway}) reflects this graduated structure directly. At Tier~1, PPG operates within voluntary cooperative and municipal frameworks where existing administrative law is generally permissive of experimental participatory processes; no constitutional reform is required. At Tier~2, integration depends on the degree of delegated statutory authority in regional legal frameworks: many European regional governments have sufficient autonomy to make participatory budgeting decisions binding within their capital expenditure remit without national legislation, and Swiss sub-national precedents demonstrate that binding participatory mechanisms at regional scale are constitutionally achievable under existing frameworks \citep{linder2010swiss}. At Tier~3, national deployment requires jurisdiction-specific constitutional analysis; however, the Financial Transparency Ledger and advisory deliberation layer require only administrative decision in most jurisdictions, and the optional referendum mechanism is achievable through ordinary parliamentary legislation in most parliamentary systems. The practical sequencing for any jurisdiction is therefore: deploy the Ledger and advisory layer under existing administrative authority, build a public legitimacy track record over Tiers~1 and~2, and use that record to support the political and legal case for reform enabling binding decisions at national scale.

\section{Digital Divide and Accessibility}

Governance systems dependent on digital infrastructure systematically risk excluding citizens without reliable internet access, suitable devices, or digital literacy. This is not a marginal concern: the OECD estimates that 15--20\% of the adult population in most member states lack basic digital skills \citep{oecd2020open}. The interface layer specification includes accessibility requirements (low-bandwidth compatibility, screen reader support, low-literacy modes), but these are design constraints rather than solved problems. Physical participation alternatives (accessible voting terminals, in-person assistance) are necessary complements to any digital governance infrastructure.

\section{Deliberation Capture}

Structured deliberation does not eliminate the risk of capture by well-resourced actors. In the Swiss system, referendum campaigns are heavily influenced by organisations with large advertising budgets \citep{linder2010swiss}; the same dynamic can be expected in PPG's deliberation layer. Expert consultation panels can be stacked; argument mapping platforms can be flooded with coordinated low-quality content. The framework's design constraints (balanced briefing materials, structured argument review, public deliberation archive) mitigate but do not eliminate these risks. Governance bodies responsible for moderating the deliberation layer are themselves subject to capture.

\section{Decentralised Architecture Maturity}

The five-layer decentralised reference architecture specified in Chapter~\ref{sec:decentralised} relies on technologies---ZK rollups, Soulbound Token infrastructure, content-addressed permanent storage---that are at varying stages of production maturity. ZK proof systems have been deployed at scale for financial applications but not for national-scale identity verification. Layer 2 networks have demonstrated adequate throughput for financial transactions but have not been tested at national election scale. The architecture is a credible medium-term engineering target, not a near-term deployment specification.

% ══════════════════════════════════════════════════════════════════
\chapter{Implementation Pathway: From Local to National Governance}
\label{sec:pathway}

The preceding chapter identifies eight categories of limitation. This chapter addresses each through a concrete, sequenced implementation strategy. The governing principle is that no limitation is a permanent barrier; each is a constraint on the \emph{pace and scope} of adoption rather than on the viability of the framework. The strategy is explicitly graduated: begin where resistance is lowest and evidence generation is fastest, accumulate empirical validation at each tier, and scale upward as each limitation is progressively resolved.

The three-tier scaling model---local, regional, national---is not arbitrary. It mirrors the institutional architecture of Swiss federalism (commune, canton, confederation) \citep{linder2010swiss}, which remains the most extensively validated empirical precedent for multi-scale direct democracy. It is independently grounded in Ostrom's polycentric governance theory \citep{ostrom1990governing}, which finds that effective institutions at higher scales are almost always built on foundations of successfully functioning lower-scale institutions. And it reflects the practical reality that constitutional and legal barriers to binding participatory governance are lowest at local level and highest at national level, making local deployment the only viable entry point in most jurisdictions.

\section{Tier 1: Local Deployment and Evidence Generation}

\textbf{Scope.} Voluntary organisations, cooperative enterprises, neighbourhood assemblies, community land trusts, and municipal participatory budgeting programmes constitute the first deployment tier. These contexts share three features that make them ideal for initial PPG adoption: they operate outside or alongside constitutional constraints that govern state institutions; participants have direct, tangible stakes in decisions; and the decision scope is bounded enough that deliberation quality can be monitored and measured.

\textbf{Overcoming identity limitations at Tier 1.} The full hybrid ZK-SBT identity model is not required at Tier 1. A simplified identity model---email or SMS-verified accounts with social graph attestation from existing community membership---is sufficient for voluntary community governance, where the population is known, bounded, and has pre-existing trust relationships. This is not a compromise on principle but a calibrated deployment: the identity model scales up with the stakes. Tier 1 deployments generate the empirical participation data and the adversarial attack surface analysis needed to calibrate the full identity model before it is deployed in higher-stakes contexts.

\textbf{Overcoming deliberation quality limitations at Tier 1.} At local scale, professional facilitation is accessible and affordable. Deliberative mini-publics---random-sample citizen panels modelled on Fishkin's deliberative polling design \citep{fishkin2018stanford}---have been successfully run at municipal level in dozens of jurisdictions. Tier 1 deployments can integrate a facilitated deliberative panel as the structured deliberation component, generating quality data on argument mapping, expert consultation, and preference change. Porto Alegre's participatory budgeting programme, sustained across multiple administrations and subsequently adopted by over 1,500 municipalities globally \citep{fung2006varieties}, demonstrates that sustained local participatory governance at this scale is institutionally feasible without digital infrastructure---PPG adds the digital layer to an already-validated social practice.

\textbf{Overcoming accessibility limitations at Tier 1.} Physical participation alternatives---community meetings, paper-based voting with digital recording, accessible terminals in libraries and civic centres---are practical at local scale and eliminate digital-divide exclusion for Tier 1 decisions. The interface layer is deployed as a progressive enhancement: citizens who can engage digitally do so; those who cannot engage through in-person equivalents whose inputs are entered into the same digital record. This hybrid model is more expensive per participant than purely digital engagement but is the only model consistent with universal participation.

\textbf{Financial Transparency Ledger at Tier 1.} Municipal-level financial transparency is the highest-ROI initial intervention and faces the fewest legal barriers. Most jurisdictions already require some form of public financial reporting at municipal level; PPG's Financial Transparency Ledger specification tightens and standardises this to transaction-level, real-time, machine-readable disclosure. Several European municipalities (including Helsinki and Amsterdam) have already adopted near-real-time procurement transparency platforms \citep{ocp2019ocds}. Tier 1 deployment of the Ledger is achievable within existing administrative law in most OECD jurisdictions without constitutional reform.

\textbf{Expected outputs from Tier 1.}
\begin{itemize}
  \item Empirical participation rate data across decision types, enabling simulation calibration.
  \item Deliberation quality metrics ($\Delta_{\text{delib}}$) from facilitated panels, providing the first real-world values for the legitimacy score model.
  \item Attack surface analysis for the simplified identity model, informing ZK proof circuit design.
  \item A public financial record demonstrating the operational viability of real-time ledger disclosure.
  \item Institutional precedents and legal templates that reduce the adoption barrier for subsequent deployments.
\end{itemize}

\section{Tier 2: Regional Deployment and Institutional Integration}

\textbf{Scope.} Regional governments, metropolitan authorities, sector-specific regulatory bodies, and inter-municipal cooperation frameworks constitute the second deployment tier. At this scale, PPG begins to interface with formal state institutions, introducing constitutional and legal integration challenges that Tier 1 deployments can largely avoid.

\textbf{Overcoming constitutional and legal limitations at Tier 2.} The constitutional barrier is surmountable through a two-track approach. \emph{Advisory track}: PPG is deployed as a formal advisory mechanism---citizen deliberation panels and votes produce recommendations with documented legitimacy scores that the regional government is required to consider and publicly respond to, but not necessarily implement. This is constitutionally unproblematic in virtually all jurisdictions and creates a practical accountability record: regions that consistently override high-legitimacy citizen recommendations face a visible accountability gap that political pressure and eventual legislative reform can close. \emph{Binding track}: for specific, bounded domains where regional governments have delegated authority (participatory budgeting for discretionary capital expenditure is the standard entry point), PPG decisions can be made formally binding within existing legal frameworks. The Swiss canton of Glarus \textit{Landsgemeinde} demonstrates that binding participatory decisions at regional scale are constitutionally achievable without national-level reform \citep{linder2010swiss}.

\textbf{Full identity model at Tier 2.} Regional deployment justifies and requires the full hybrid identity model. The population is larger and more anonymous than at Tier 1; the decisions carry greater material stakes; and adversarial incentives are correspondingly higher. ZK proof circuit audits completed during Tier 1 operations, combined with the empirical Sybil attack data from Tier 1 social-graph analysis, provide the evidential basis for deploying the full model with confidence. SBT infrastructure issued by regional civic registries provides the credential layer; the ZK eligibility proof ensures that regional identity data is not exposed to the governance engine.

\textbf{Merit-and-Mandate at Tier 2.} Regional regulatory agencies---environmental regulators, transport authorities, planning commissions---are the appropriate initial deployment context for Merit-and-Mandate accountability. They are typically established by regional legislation that can be amended without constitutional reform; they have clear, assessable performance mandates; and their decisions are consequential enough to motivate citizen petition but specific enough that the removal standard (documented incompetence or mandate breach) can be applied with reasonable objectivity. The dual-accountability model is already approximated in several contexts: Swiss cantonal oversight mechanisms, New Zealand's regulatory accountability framework, and the \textit{Ombudsman} institutions in Nordic countries all contain elements that PPG's Merit-and-Mandate model generalises and formalises.

\textbf{Constrained Administrative Authority at Tier 2.} Regional government deliberations are the target for the transparent deliberations specification. In most jurisdictions, Freedom of Information legislation already requires disclosure of final decisions; the innovation is extending this to the deliberation \emph{process}---committee minutes, draft policy documents, expert consultations---with a short delay rather than retrospective disclosure years later. Several national FOI regimes (Sweden's \textit{offentlighetsprincipen} since 1766, Finland's Act on the Openness of Government Activities) approach this standard. PPG's specification formalises the machine-readable open-format requirement and the mandatory budget-to-actuals linkage that existing FOI regimes do not provide.

\textbf{Expected outputs from Tier 2.}
\begin{itemize}
  \item Constitutional and legal precedents for binding participatory decisions at sub-national scale.
  \item Full identity model performance data from a genuinely anonymous, adversarially motivated population.
  \item Empirical evidence on Merit-and-Mandate dual accountability mechanisms in regulatory contexts.
  \item Financial Transparency Ledger at regional scale, enabling cross-regional comparison and anomaly detection.
  \item A legitimacy score time series spanning multiple decision cycles, enabling longitudinal institutional health analysis.
\end{itemize}

\section{Tier 3: National Deployment}
\label{sec:tier3}

\textbf{Scope.} National legislative processes, national regulatory frameworks, and constitutional amendment procedures constitute the third deployment tier. This is the highest-stakes and highest-barrier deployment context. PPG does not call for the elimination of national institutions: executive management of complex public systems requires dedicated professionals, and the framework preserves professional public roles with defined mandate scope and dual accountability structures. What changes at Tier 3, as Section~\ref{sec:conversion} analyses, is the \emph{source and structure of authority}---flowing progressively through merit and mandate channels rather than through party nomination and electoral plurality alone.

\textbf{The Swiss model as a blueprint.} Switzerland's national direct democracy architecture---mandatory constitutional referendums, optional legislative referendums, and citizen initiatives---provides the closest existing approximation of Tier 3 PPG operation. Key features transferable from the Swiss model: the optional referendum as a citizen veto mechanism requiring a relatively low signature threshold; the popular initiative as a citizen agenda-setting mechanism; and the double-majority requirement for constitutional changes, which PPG's dynamic quorum formula generalises. Key weaknesses of the Swiss model that PPG addresses: informal, media-influenced pre-referendum deliberation replaced by structured deliberation with argument mapping; signature-collection barriers replaced by digital participation with identity verification; static quorum rules replaced by dynamic quorum calibrated to decision impact.

\textbf{Overcoming constitutional barriers at Tier 3.} Two pathways exist. \emph{Incremental pathway}: national adoption of the Financial Transparency Ledger and the advisory deliberation layer requires no constitutional reform in most jurisdictions---it is an administrative and legislative decision. National adoption of the optional referendum mechanism requires only parliamentary legislation in parliamentary systems (it is already in place in Switzerland, Ireland, New Zealand, and several Nordic countries). Full binding participatory governance at national scale in presidential or entrenched-constitution systems may require constitutional amendment, which is appropriately a long-term objective supported by the precedents generated at Tiers 1 and 2. \emph{Constitutional moment pathway}: historical evidence shows that constitutional reform is most achievable in periods of institutional crisis or transition. PPG's formal specification provides a ready-made architecture for deployment in constitutional moments---post-conflict governance design, constitutional convention processes, or transitional governance---where institutional inertia is lower. The Irish Citizens' Assembly (2016--2018), which produced binding constitutional recommendations through a structured deliberative process, is a Tier-3-scale precedent for the deliberation layer \citep{farrell2019remaking}.

\textbf{The national Financial Transparency Ledger.} This is the Tier 3 component with the clearest near-term adoption path and the most direct impact. A national real-time public ledger of all government transactions---from ministerial travel expenses to defence procurement to social transfer payments---requires legislation rather than constitutional reform, is technically straightforward given existing government financial systems, and addresses the most directly legible form of governance failure (financial opacity and corruption) in a way that creates immediate, visible public value. The OECD's open government research documents a consistent positive relationship between financial transparency and public trust, reduced corruption, and improved fiscal discipline \citep{oecd2020open}. Proactive adoption of the national Ledger specification by any jurisdiction creates a replicable template and a political accountability benchmark for others.

\section{Cross-Cutting Design Commitments Across All Tiers}

Several framework properties are invariant across tiers---not negotiable as initial compromises but maintained consistently from the first local deployment to full national implementation:

\begin{enumerate}
  \item \textbf{The blank vote is always valid.} At every tier and in every deployment context, a blank or abstain vote constitutes full participation. No citizen is ever required to express a political opinion; the obligation is to engage with the process.
  \item \textbf{Deliberation records are always public.} From the first Tier 1 community deployment, all deliberation arguments, expert assessments, vote tallies, and decision rationale are public records. There is no intermediate stage where deliberations are private.
  \item \textbf{Citizen data is always minimised.} At every tier, the system records that an eligible citizen participated; it never records how they voted or links participation to personal identity.
  \item \textbf{Financial flows are always logged.} Every tier includes a Financial Transparency Ledger from its first operation. Transparency is not deferred to a future stage when the system is ``ready''; it is a day-one architectural requirement.
  \item \textbf{The framework governs itself.} At every tier, the governance parameters---quorum thresholds, deliberation windows, veto periods---are themselves subject to the participatory process. Citizens can propose and approve changes to the rules through the same pipeline.
\end{enumerate}

These commitments are the structural embodiment of the framework's core principle: the separation of public from private as an architectural constraint rather than a policy option. Compromises on scope, scale, and technical implementation are acceptable and necessary. Compromises on these five invariants are not.

\section{Summary: Scaling Architecture}

\begin{table}[htbp]
\centering
\caption{PPG three-tier scaling architecture with limitation resolution pathway.}
\begin{tabular}{@{} p{2.2cm} p{3.4cm} p{3.4cm} p{3.4cm} @{}}
\toprule
\textbf{Limitation} & \textbf{Tier 1 approach} & \textbf{Tier 2 approach} & \textbf{Tier 3 approach} \\
\midrule
Identity & Simplified verified accounts; social graph from community membership & Full ZK-SBT model, informed by T1 audit data & National civic registry SBTs; ZK proof at scale \\
Simulation fidelity & Calibrate model parameters from pilot participation data & Replace archetype estimates with empirical distributions; validate quorum-throughput behaviour & Agent-based model with adaptive strategies \\
Game-theoretic assumptions & Collect mobilisation cost data from adversarial pilot testing & Empirical cost/gain parameter estimation; corrected turnout-dependent threshold & Full empirical validation of cost and gain functional forms \\
Deliberation quality & Facilitated mini-public panels & Structured deliberation with professional moderation & Citizen assembly model; random-sample panels for constitutional decisions \\
Digital divide & Physical hybrid participation (paper, terminals) & Accessible terminals + in-person equivalents; digital-first with offline fallback & Universal access mandate; hybrid model required by law \\
Legal/constitutional & Voluntary, outside state institutions & Advisory track + bounded binding domains & Incremental legislative adoption; constitutional amendment as long-run target \\
Deliberation capture & Small-group facilitation; public archive & Structured argument mapping; balanced briefing materials; anti-flooding moderation & Sortition-based panel composition; public deliberation archive with anomaly detection \\
Architecture maturity & Simplified stack; no ZK rollup required & Layer 2 at regional scale; benchmarked against T1 data & Full five-layer decentralised architecture; benchmarked against T2 data \\
\bottomrule
\end{tabular}
\label{tab:scaling}
\end{table}

% ══════════════════════════════════════════════════════════════════
\chapter{Conclusion}
\label{sec:conclusion}

This thesis has developed Programmable Participatory Governance (PPG), a formal, implementation-agnostic framework for governance system design. PPG is motivated by five empirically documented structural deficits in contemporary representative democracy: participation inequality, transparency gaps, accountability lag, influence concentration, and the asymmetry between institutional opacity and citizen surveillance. These deficits share a common structural origin in the mediation of citizen preferences through political parties---a coordination mechanism whose conditions of necessity are dissolving as direct preference aggregation at civic scale becomes technically feasible (Section~\ref{sec:conversion}). The framework synthesises contributions from deliberative democracy theory, participatory democracy research, Swiss direct democracy practice, collective action institutional economics, and cryptographic and distributed systems research, and extends them with four original policy-design specifications---the Civic Participation Mandate, the Financial Transparency Ledger, Constrained Administrative Authority, and the Merit-and-Mandate public role accountability model---unified under a fifth architectural commitment: the Transparency Default, which reverses the information asymmetry between institutions and citizens at the protocol level.

\section{Summary of Contributions}

\begin{enumerate}
  \item \textbf{Formal state-machine model.} The governance pipeline is specified as a deterministic, auditable state machine $\mathcal{G} = (S, s_0, \Sigma, \delta, F)$, enabling formal verification of system behaviour and serving as an unambiguous implementation specification.
  \item \textbf{Dynamic quorum mechanism.} Participation requirements scale with proposal impact ($Q(i) = Q_{\text{base}} + \alpha\cdot\sigma(i)$), calibrating legitimacy requirements to decision stakes. This extends the Swiss double-majority precedent to a continuously parameterisable model.
  \item \textbf{Composite legitimacy score.} $L(d)$ provides a quantitative, cross-decision governance quality metric, enabling longitudinal institutional health monitoring independent of subjective assessments.
  \item \textbf{Hybrid identity model.} The combination of Soulbound Token credentials, social-graph Sybil resistance, and zero-knowledge eligibility proofs addresses the privacy-Sybil resistance trade-off at the current state of the art, with limitations explicitly identified.
  \item \textbf{Simulation evidence.} Stochastic simulation across three behavioural archetypes demonstrates stable approval convergence, functional veto dynamics, and manipulation deterrence across a range of quorum parameter values.
  \item \textbf{Game-theoretic analysis of manipulation resistance.} A single-round equilibrium result (Theorem~\ref{thm:manipulation}) proves that for any positive quorum threshold there exists a critical faction size below which manipulation cost exceeds expected gain. A repeated-game extension (Proposition~\ref{prop:collusion}) characterises the long-run collusion conditions, showing that the dynamic quorum parameter $\alpha$ raises the patience threshold required for sustained collusion and that patient colluders are detectable across observable rounds.
  \item \textbf{Transparency Default specification.} All governance action data is public by default; all citizen identity data is private by default. This is operationalised through append-only cryptographically signed public logs and ZK proof verification, calibrated against OECD open government standards \citep{oecd2020open}.
  \item \textbf{Financial Transparency Ledger.} Every transaction from public funds---at transaction-level granularity, in real time, in machine-readable open format, with an immutable audit trail---is a public record. This is the financial extension of the Transparency Default and the most immediately deployable component of the framework.
  \item \textbf{Civic Participation Mandate.} An optional framework module that treats governance participation as a civic obligation analogous to tax filing, with a blank/abstain option that ensures no political expression is compelled. Supported by compulsory voting evidence from 36 democracies \citep{lijphart1997unequal, idea2021global}.
  \item \textbf{Constrained Administrative Authority.} Public bodies exercise power only within explicitly mandated scope; all internal deliberations are public records; all policy outputs are veto-eligible. This applies continuous between-election accountability to administrative, not only legislative, decisions.
  \item \textbf{Merit-and-Mandate public role model.} Public roles are filled by competence and held accountable through two independent channels: professional performance review and citizen mandate. Individuals in public roles are protected from political interference but accountable to public performance standards.
  \item \textbf{Decentralised reference architecture.} The five-layer architecture provides a concrete engineering specification for trustless, operator-independent deployment, addressing the platform capture risks documented in the digital democracy literature \citep{mueller2019digital}.
  \item \textbf{Iterative governance protocol.} PPG's own parameters are governed through the same pipeline they describe, following Ostrom's adaptive institution design principles \citep{ostrom1990governing}. The framework is not a fixed solution but a self-refining specification.
  \item \textbf{Three-tier implementation pathway.} A concrete, graduated deployment strategy (Chapter~\ref{sec:pathway}) maps each identified limitation to a resolution strategy calibrated to the scale and institutional context of local, regional, and national deployment, providing a practical route from voluntary community adoption to national governance architecture.
\end{enumerate}

\section{The Framework in Context}

The contributions of this thesis are best understood in terms of the theoretical principle that unifies them: the structural separation of public from private as an architectural constraint on governance systems.

This principle is not new as a normative claim. It is implicit in classical liberal constitutionalism, explicit in Arendt's analysis of the public realm \citep{arendt1958human}, and operationally present---in partial form---in freedom of information legislation and open government initiatives worldwide \citep{oecd2020open}. What has been lacking is a unified, formally specified, implementation-agnostic framework that operationalises this principle consistently across the full governance stack: from the financial ledger to the deliberation record, from identity verification to administrative accountability. PPG is offered as a contribution toward that specification.

The thesis situates this contribution within an established body of work. Swiss direct democracy provides the strongest empirical evidence that frequent, binding citizen participation at national scale is feasible \citep{linder2010swiss}, but Swiss institutions predate the formal tools needed to extend the model beyond signature collection and mass-media campaigns. Deliberative polling provides rigorous evidence that structured deliberation improves decision quality \citep{fishkin2018stanford}, but as a research methodology it is not a governance architecture. DAO governance demonstrates that programmable, tamper-evident rule execution is technically realisable \citep{hassan2021decentralized}, but its dominant one-token-one-vote design embeds plutocracy where democratic governance requires equal standing. Open government initiatives advance proactive disclosure norms \citep{oecd2020open} but do not integrate deliberation, binding decisions, or identity infrastructure.

PPG is designed as a \textbf{modular specification}: each component (the Financial Transparency Ledger, the Civic Participation Mandate, Constrained Administrative Authority, Merit-and-Mandate accountability) constitutes an independently deployable open-government measure and a contribution to the broader research programme. Full decentralisation is not a prerequisite for any component: the framework is infrastructure-agnostic, and its most foundational property---that all governance actions and financial flows are public records while all citizen identity data is architecturally minimised---can be implemented on existing institutional infrastructure. The decentralised architecture (Chapter~\ref{sec:decentralised}) offers the strongest tamper-resistance guarantees and is specified as the long-run implementation target.

This thesis does not claim to resolve the enduring tensions in democratic theory between participation and stability, between deliberation and efficiency, or between openness and security. It claims to provide a formal, evidence-grounded specification of how those tensions can be managed within a coherent architectural framework, and a research agenda for testing that specification empirically. Each component requires pilot deployment, independent audit, and iterative refinement. The framework is a starting point, not an endpoint.

\section{Assessment Against the Evidence Base}

The framework's claims are grounded in established empirical literature. Deliberative polling evidence supports the feasibility of structured deliberation improving decision quality at scale \citep{fishkin2018stanford}. Swiss direct democracy research supports the feasibility of high-frequency citizen participation at national scale, including measurable effects on civic satisfaction and policy responsiveness \citep{linder2010swiss, frey1994direct}. Ostrom's institutional analysis supports the viability of iterative, community-governed rule adaptation \citep{ostrom1990governing}. DAO governance literature identifies the specific design failure PPG's identity model corrects \citep{hassan2021decentralized, chohan2017dao}. Compulsory voting research supports the participation-equalising effect of the Civic Participation Mandate \citep{lijphart1997unequal}. Merit-based bureaucratic structure research supports the governance performance case for the Merit-and-Mandate model \citep{rauch2000bureaucratic}. The simulation results are consistent with these findings but do not substitute for empirical deployment evidence.

\section{Research Agenda}

Chapter~\ref{sec:pathway} maps each limitation in Chapter~\ref{sec:limitations} to a concrete resolution strategy within the three-tier scaling architecture. The research agenda is therefore structured by tier:

\textbf{Tier 1 (immediate).}
\begin{enumerate}
  \item Pilot deployments in voluntary governance contexts---cooperative enterprises, community land trusts, municipal participatory budgeting---to generate the first empirical participation data, calibrate simulation parameters, and test the simplified identity model.
  \item Financial Transparency Ledger adoption at municipal level as a standalone open-government measure, independent of full PPG deployment, generating replicable legal and technical templates.
  \item Facilitated deliberative panel design and quality metric ($\Delta_{\text{delib}}$) development, validated against existing deliberative polling data \citep{fishkin2018stanford}.
\end{enumerate}

\textbf{Tier 2 (medium-term).}
\begin{enumerate}
  \item Full ZK eligibility proof circuit design and independent security audit, drawing on Tier 1 adversarial data.
  \item Merit-and-Mandate pilot deployments in regional regulatory agencies, generating dual-accountability performance data.
  \item Regional Financial Transparency Ledger implementations, enabling cross-regional comparative analysis of fiscal discipline and public trust effects.
  \item Smart contract formalisation: machine-checked formal proofs of the state machine implementation.
  \item Layer 2 network benchmarking at regional election scales.
  \item Empirical study of accountability channel transition (Section~\ref{sec:conversion}): tracking the ratio of Merit-and-Mandate removals initiated through the citizen mandate channel versus the political chain of command in Tier~2 pilot agencies, providing a direct empirical measure of conversion progress.
\end{enumerate}

\textbf{Tier 3 (long-term).}
\begin{enumerate}
  \item National Financial Transparency Ledger adoption campaigns, drawing on Tier 1--2 evidence base.
  \item Constitutional and legislative reform pathway documentation for optional referendum and citizen initiative adoption in jurisdictions without existing mechanisms.
  \item Full five-layer decentralised architecture deployment, benchmarked against Tier 2 performance data.
  \item Longitudinal study of legitimacy score trends across multi-year Tier 1--2 deployments, providing the evidence base for national-scale adoption arguments.
\end{enumerate}

The Swiss cantonal tradition demonstrates that locally grounded institutional experiments, sustained across generations, can produce nationally adopted governance innovations \citep{linder2010swiss}. Porto Alegre's participatory budgeting demonstrates that a single successful local experiment can diffuse to over a thousand municipalities within two decades \citep{fung2006varieties}. PPG provides the formal specification, the modular architecture, and the three-tier research programme from which that process can begin---with no component requiring a prior component to be complete, and with the core principle invariant across all of them.

\vspace{1em}
\noindent\textit{The simulation code, data, and all source files for this thesis are available at \href{https://github.com/seht/participatory-governance}{github.com/seht/participatory-governance}.}

% ══════════════════════════════════════════════════════════════════
\bibliographystyle{plainnat}
\bibliography{references}

@book{dahl1989democracy,
  title     = {Democracy and Its Critics},
  author    = {Dahl, Robert A.},
  year      = {1989},
  publisher = {Yale University Press}
}

@book{habermas1996between,
  title     = {Between Facts and Norms: Contributions to a Discourse Theory of Law and Democracy},
  author    = {Habermas, J{\"u}rgen},
  year      = {1996},
  publisher = {MIT Press}
}

@book{ostrom1990governing,
  title     = {Governing the Commons: The Evolution of Institutions for Collective Action},
  author    = {Ostrom, Elinor},
  year      = {1990},
  publisher = {Cambridge University Press}
}

@misc{buterin2022desoc,
  title  = {Decentralized Society: Finding Web3's Soul},
  author = {Buterin, Vitalik and Weyl, E. Glen and Ohlhaver, Puja},
  year   = {2022},
  note   = {Working paper}
}

@book{arrow1951social,
  title     = {Social Choice and Individual Values},
  author    = {Arrow, Kenneth J.},
  year      = {1951},
  publisher = {John Wiley \& Sons}
}

@book{fishkin1991democracy,
  title     = {Democracy and Deliberation: New Directions for Democratic Reform},
  author    = {Fishkin, James S.},
  year      = {1991},
  publisher = {Yale University Press}
}

@book{rawls1971theory,
  title     = {A Theory of Justice},
  author    = {Rawls, John},
  year      = {1971},
  publisher = {Harvard University Press}
}

@article{nash1950equilibrium,
  title   = {Equilibrium Points in N-Person Games},
  author  = {Nash, John F.},
  year    = {1950},
  journal = {Proceedings of the National Academy of Sciences},
  volume  = {36},
  number  = {1},
  pages   = {48--49}
}

@article{douceur2002sybil,
  title     = {The Sybil Attack},
  author    = {Douceur, John R.},
  booktitle = {Peer-to-Peer Systems},
  year      = {2002},
  pages     = {251--260},
  publisher = {Springer}
}

@article{boneh2020graduate,
  title   = {A Graduate Course in Applied Cryptography},
  author  = {Boneh, Dan and Shoup, Victor},
  year    = {2020},
  note    = {Online textbook, version 0.5}
}

@inproceedings{ford2020identity,
  title     = {Identity and Personhood in Digital Democracy},
  author    = {Ford, Bryan},
  booktitle = {Proceedings of the 2020 ACM SIGCOMM Workshop},
  year      = {2020},
  publisher = {ACM}
}

@book{pentland2014social,
  title     = {Social Physics: How Good Ideas Spread},
  author    = {Pentland, Alex},
  year      = {2014},
  publisher = {Penguin Press}
}

@book{axelrod1984evolution,
  title     = {The Evolution of Cooperation},
  author    = {Axelrod, Robert},
  year      = {1984},
  publisher = {Basic Books}
}

@book{zuboff2019surveillance,
  title     = {The Age of Surveillance Capitalism},
  author    = {Zuboff, Shoshana},
  year      = {2019},
  publisher = {PublicAffairs}
}

@article{lupia2006dahl,
  title   = {Dahl's Democratic Dilemma},
  author  = {Lupia, Arthur},
  year    = {2006},
  journal = {Political Science and Politics},
  volume  = {39},
  number  = {1},
  pages   = {21--25}
}

@book{held2006models,
  title     = {Models of Democracy},
  author    = {Held, David},
  year      = {2006},
  edition   = {3rd},
  publisher = {Stanford University Press}
}

@book{pateman1970participation,
  title     = {Participation and Democratic Theory},
  author    = {Pateman, Carole},
  year      = {1970},
  publisher = {Cambridge University Press}
}

@book{barber1984strong,
  title     = {Strong Democracy: Participatory Politics for a New Age},
  author    = {Barber, Benjamin R.},
  year      = {1984},
  publisher = {University of California Press}
}

@book{linder2010swiss,
  title     = {Swiss Democracy: Possible Solutions to Conflict in Multicultural Societies},
  author    = {Linder, Wolf},
  year      = {2010},
  edition   = {3rd},
  publisher = {Palgrave Macmillan}
}

@book{arendt1958human,
  title     = {The Human Condition},
  author    = {Arendt, Hannah},
  year      = {1958},
  publisher = {University of Chicago Press}
}

@book{gutmann2004deliberating,
  title     = {Why Deliberative Democracy?},
  author    = {Gutmann, Amy and Thompson, Dennis},
  year      = {2004},
  publisher = {Princeton University Press}
}

@book{rosanvallon2008counter,
  title     = {Counter-Democracy: Politics in an Age of Distrust},
  author    = {Rosanvallon, Pierre},
  year      = {2008},
  publisher = {Cambridge University Press}
}

@article{fung2006varieties,
  title   = {Varieties of Participation in Complex Governance},
  author  = {Fung, Archon},
  journal = {Public Administration Review},
  volume  = {66},
  pages   = {66--75},
  year    = {2006},
  note    = {Special Issue}
}

@misc{buterin2014ethereum,
  title  = {Ethereum: A Next-Generation Smart Contract and Decentralised Application Platform},
  author = {Buterin, Vitalik},
  year   = {2014},
  note   = {Ethereum Foundation whitepaper. \url{https://ethereum.org/en/whitepaper/}}
}

@book{defilippi2018blockchain,
  title     = {Blockchain and the Law: The Rule of Code},
  author    = {De Filippi, Primavera and Wright, Aaron},
  year      = {2018},
  publisher = {Harvard University Press}
}

@article{hassan2021decentralized,
  title   = {Decentralized Autonomous Organization},
  author  = {Hassan, Samer and De Filippi, Primavera},
  journal = {Internet Policy Review},
  volume  = {10},
  number  = {2},
  year    = {2021},
  doi     = {10.14763/2021.2.1556}
}

@book{oecd2020open,
  title     = {Open Government: The Global Context and the Way Forward},
  author    = {{OECD}},
  year      = {2020},
  publisher = {OECD Publishing},
  address   = {Paris},
  doi       = {10.1787/0fe2d765-en}
}

@book{norris2011democratic,
  title     = {Democratic Deficit: Critical Citizens Revisited},
  author    = {Norris, Pippa},
  year      = {2011},
  publisher = {Cambridge University Press}
}

@book{dalton2004democratic,
  title     = {Democratic Challenges, Democratic Choices: The Erosion of Political Support in Advanced Industrial Democracies},
  author    = {Dalton, Russell J.},
  year      = {2004},
  publisher = {Oxford University Press}
}

@book{schlozman2012unheavenly,
  title     = {The Unheavenly Chorus: Unequal Political Voice and the Broken Promise of American Democracy},
  author    = {Schlozman, Kay Lehman and Verba, Sidney and Brady, Henry E.},
  year      = {2012},
  publisher = {Princeton University Press}
}

@article{gilens2014testing,
  title   = {Testing Theories of American Politics: Elites, Interest Groups, and Average Citizens},
  author  = {Gilens, Martin and Page, Benjamin I.},
  journal = {Perspectives on Politics},
  volume  = {12},
  number  = {3},
  pages   = {564--581},
  year    = {2014},
  publisher = {Cambridge University Press}
}

@report{idea2021global,
  title  = {The Global State of Democracy 2021: Building Resilience in a Pandemic Era},
  author = {{International IDEA}},
  year   = {2021},
  institution = {International Institute for Democracy and Electoral Assistance},
  address = {Stockholm}
}

@techreport{fishkin2018stanford,
  title       = {Deliberative Polling: Toward a Better-Informed Democracy},
  author      = {Fishkin, James S.},
  year        = {2018},
  institution = {Center for Deliberative Democracy, Stanford University},
  note        = {Technical report}
}

@article{frey1994direct,
  title   = {Direct Democracy: Politico-Economic Lessons from Swiss Experience},
  author  = {Frey, Bruno S.},
  journal = {American Economic Review},
  volume  = {84},
  number  = {2},
  pages   = {338--342},
  year    = {1994}
}

@article{stutzer2000happiness,
  title   = {Happiness Prospers in Democracy},
  author  = {Stutzer, Alois and Frey, Bruno S.},
  journal = {Journal of Happiness Studies},
  volume  = {1},
  number  = {1},
  pages   = {79--102},
  year    = {2000}
}

@article{mueller2019digital,
  title   = {Digital Technologies and Democratic Governance},
  author  = {M{\"u}ller, Jan-Werner},
  journal = {Journal of Democracy},
  volume  = {30},
  number  = {2},
  pages   = {62--73},
  year    = {2019}
}

@article{dutwin2003character,
  title   = {The Character of Deliberation: Equality, Argument, and the Formation of Public Opinion in Discussion Groups},
  author  = {Dutwin, David},
  journal = {Political Communication},
  volume  = {20},
  number  = {2},
  pages   = {111--132},
  year    = {2003}
}

@book{gastil2010promise,
  title     = {The Promise and Practice of Democratic Deliberation},
  author    = {Gastil, John},
  year      = {2010},
  publisher = {University of Chicago Press}
}

@article{lijphart1997unequal,
  author  = {Lijphart, Arend},
  title   = {Unequal Participation: Democracy's Unresolved Dilemma},
  journal = {American Political Science Review},
  year    = {1997},
  volume  = {91},
  number  = {1},
  pages   = {1--14},
  doi     = {10.2307/2952255}
}

@article{hayek1945use,
  author  = {Hayek, Friedrich A.},
  title   = {The Use of Knowledge in Society},
  journal = {American Economic Review},
  year    = {1945},
  volume  = {35},
  number  = {4},
  pages   = {519--530}
}

@book{sowell1980knowledge,
  author    = {Sowell, Thomas},
  title     = {Knowledge and Decisions},
  year      = {1980},
  publisher = {Basic Books},
  address   = {New York}
}

@book{herman1988manufacturing,
  author    = {Herman, Edward S. and Chomsky, Noam},
  title     = {Manufacturing Consent: The Political Economy of the Mass Media},
  year      = {1988},
  publisher = {Pantheon Books},
  address   = {New York}
}

@book{hirschman1970exit,
  author    = {Hirschman, Albert O.},
  title     = {Exit, Voice, and Loyalty: Responses to Decline in Firms, Organizations, and States},
  year      = {1970},
  publisher = {Harvard University Press},
  address   = {Cambridge, MA}
}

@book{scott1998seeing,
  author    = {Scott, James C.},
  title     = {Seeing Like a State: How Certain Schemes to Improve the Human Condition Have Failed},
  year      = {1998},
  publisher = {Yale University Press},
  address   = {New Haven}
}

@article{rauch2000bureaucratic,
  author  = {Rauch, James E. and Evans, Peter B.},
  title   = {Bureaucratic Structure and Bureaucratic Performance in Less Developed Countries},
  journal = {Journal of Public Economics},
  year    = {2000},
  volume  = {75},
  number  = {1},
  pages   = {49--71},
  doi     = {10.1016/S0047-2727(99)00044-4}
}

@techreport{ocp2019ocds,
  author      = {{Open Contracting Partnership}},
  title       = {Open Contracting Data Standard Documentation},
  year        = {2019},
  institution = {Open Contracting Partnership},
  url         = {https://standard.open-contracting.org/}
}

@article{siri2022daos,
  title   = {A Primer on DAOs},
  author  = {Siri, Gabriel},
  journal = {Cornell Law Review Online},
  volume  = {108},
  year    = {2022},
  pages   = {28--76}
}

@article{reijers2021onchain,
  title   = {Now the Code Runs Itself: On-Chain and Off-Chain Governance of Blockchain Technologies},
  author  = {Reijers, Wessel and O'Brolch{\'a}in, Fiachra and Haynes, Paul},
  journal = {Topoi},
  volume  = {40},
  number  = {4},
  pages   = {821--831},
  year    = {2021},
  publisher = {Springer}
}

@article{chohan2017dao,
  title   = {The Decentralized Autonomous Organization and Governance Issues},
  author  = {Chohan, Usman W.},
  journal = {SSRN Working Paper},
  year    = {2017},
  note    = {\url{https://ssrn.com/abstract=3082055}}
}

@article{tang2019digital,
  title   = {Digital Democracy and Open Government in Taiwan},
  author  = {Tang, Audrey},
  journal = {Apolitical Foundation},
  year    = {2019},
  note    = {Policy report on the vTaiwan and Join platform deliberation processes}
}

@techreport{gilman2016better,
  title       = {Better Reykjavik: An Overview of Digital Participatory Democracy in Iceland},
  author      = {Gilman, Hollie Russon},
  year        = {2016},
  institution = {New America Foundation},
  note        = {Case study report}
}

@article{barandiaran2017decidim,
  title   = {Decidim: Political and Technopolitical Networks for Participatory Democracy},
  author  = {Barandiaran, Xabier E. and Calleja-L{\'o}pez, Antonio},
  journal = {Universitat Oberta de Catalunya},
  year    = {2017},
  note    = {Working paper}
}

@article{hurwicz1973design,
  title   = {The Design of Mechanisms for Resource Allocation},
  author  = {Hurwicz, Leonid},
  journal = {American Economic Review},
  volume  = {63},
  number  = {2},
  pages   = {1--30},
  year    = {1973}
}

@article{maskin2008mechanism,
  title   = {Mechanism Design: How to Implement Social Goals},
  author  = {Maskin, Eric S.},
  journal = {American Economic Review},
  volume  = {98},
  number  = {3},
  pages   = {567--576},
  year    = {2008}
}

@article{fudenberg1986folk,
  title   = {The Folk Theorem in Repeated Games with Discounting or with Incomplete Information},
  author  = {Fudenberg, Drew and Maskin, Eric},
  journal = {Econometrica},
  volume  = {54},
  number  = {3},
  pages   = {533--554},
  year    = {1986}
}

@article{austen1989interest,
  title   = {Interest Groups, Voting, and Campaigns},
  author  = {Austen-Smith, David},
  journal = {Economics and Politics},
  volume  = {1},
  number  = {2},
  pages   = {107--132},
  year    = {1989}
}

@book{farrell2019remaking,
  title     = {Remaking Democracy: How Citizens Are Changing Politics},
  author    = {Farrell, David M. and Suiter, Jane},
  year      = {2019},
  publisher = {Cornell University Press}
}

@book{downs1957economic,
  author    = {Downs, Anthony},
  title     = {An Economic Theory of Democracy},
  year      = {1957},
  publisher = {Harper \& Row},
  address   = {New York}
}

\end{document}